 \numberwithin{equation}{section}
\newcommand{\xkh}[1]{\left(#1\right)}
\newcommand{\dkh}[1]{\left\{#1\right\}}
\newcommand{\jkh}[1]{\left\langle#1\right\rangle}
\newcommand{\nj}[1]{\langle {#1} \rangle}
\newcommand{\norm}[1]{\|{#1}\|_2}
\newcommand{\normone}[1]{\|{#1}\|_1}
\newcommand{\norms}[1]{\|{#1}\|}
\newcommand{\abs}[1]{\lvert#1\rvert}
\newcommand{\argmin}[1]{\mathop{\rm argmin}\limits_{#1}}
\newcommand{\s}{{\mathrm s}}
\newcommand{\E}{{\mathbb E}}
\newcommand{\PP}{{\mathbb P}}
\newcommand{\R}{{\mathbb R}}
\newcommand{\Rd}{{\mathbb R}^d}
\newcommand{\T}{\top}
\newcommand{\C}{{\mathbb C}}
\newcommand{\x}{{\widehat{\mathbf{x}}}}
\newcommand{\xx}{{\widehat{\mathbf{x}'}}}
\newcommand{\hy}{{\widehat{\mathbf{y}}}}
\newcommand{\vx}{{\mathbf x}}
\newcommand{\vy}{{\mathbf y}}
\newcommand{\vu}{{\mathbf u}}
\newcommand{\vz}{{\mathbf z}}
\newcommand{\ve}{{\mathbf \eta}}
\newcommand{\vve}{{\mathbf e}}
\newcommand{\ta}{{\tilde{\mathbf{a}}}}
\newcommand{\va}{{\mathbf a}}
\newcommand{\vh}{{\mathbf h}}
\newcommand{\F}{{\mathbb F}}
\newcommand{\supp}{{\rm supp}}
\renewcommand{\omega}{\eta}
\newcommand{\RNum}[1]{\uppercase\expandafter{\romannumeral #1\relax}}
\newtheorem{definition}{Definition}[section]
\newtheorem{theorem}[definition]{Theorem}
\newtheorem{lemma}[definition]{Lemma}
\newtheorem{remark}[definition]{Remark}
\date{}
\begin{document}
\baselineskip 18pt
\bibliographystyle{plain}
\title{The estimation performance of nonlinear least squares for phase retrieval}
\author{Meng Huang}
\address{LSEC, Inst.~Comp.~Math., Academy of
Mathematics and System Science,  Chinese Academy of Sciences, Beijing, 100091, China
\newline
School of Mathematical Sciences, University of Chinese Academy of Sciences, Beijing 100049, China
}
\email{hm@lsec.cc.ac.cn}

\author{Zhiqiang Xu}
\thanks{Zhiqiang Xu was supported  by NSFC grant (91630203, 11688101),
Beijing Natural Science Foundation (Z180002).}
\address{LSEC, Inst.~Comp.~Math., Academy of
Mathematics and System Science,  Chinese Academy of Sciences, Beijing, 100091, China
\newline
School of Mathematical Sciences, University of Chinese Academy of Sciences, Beijing 100049, China}
\email{xuzq@lsec.cc.ac.cn}
\begin{abstract}
Suppose that $\vy=\abs{A\vx_0}+\eta$ where $\vx_0\in \R^d$ is the target signal and $\eta\in \R^m$ is a noise vector.
The aim of phase retrieval is to estimate $\vx_0$ from $\vy$. A popular model for estimating $\vx_0$
is the nonlinear least square  $\x:={\rm argmin}_{\vx} \|\abs{A\vx}-\vy\|_2$.
One already develops many efficient algorithms for solving the model, such as the
seminal error reduction algorithm.
In this paper, we present the estimation performance of the  model with  proving  that $\|\x-\vx_0\|\lesssim {\|\eta\|_2}/{\sqrt{m}}$ under the assumption of $A$ being a Gaussian random matrix.
We also prove the reconstruction error  ${\|\eta\|_2}/{\sqrt{m}}$ is sharp.
  For the case where $\vx_0$ is sparse, we
study the estimation performance of both the nonlinear Lasso of phase retrieval and its unconstrained version.
Our results are non-asymptotic, and we do not assume any distribution
on the noise $\eta$.
To the best of our knowledge, our results represent the first theoretical guarantee for
the nonlinear least square and for the nonlinear Lasso of phase retrieval.

\end{abstract}
\maketitle
\section{Introduction}
\subsection{Phase retrieval}
Suppose that $\vx_0\in \F^d$ with $\F\in \{\R,\C\}$ is the target signal. The information that we gather about $\vx_0$ is
\[
 \vy=\abs{A\vx_0}+\eta,
\]
where $A=(\va_1,\ldots,\va_m)^T\in \F^{m\times d}$ is the known measurement matrix and $\eta\in \R^m$ is a noise vector. Throughout  this paper, we often assume  that $A\in \R^{m\times d}$ is a Gaussian random matrix with entries $a_{jk}\sim N(0,1)$ with $m\gtrsim d$ and we also assume that
$\eta$ is either fixed or random and independent of $A$.

The aim of phase retrieval is to estimate $\vx_0$ from $\vy$.
Phase retrieval is raised in numerous applications such as X-ray crystallography \cite{harrison1993phase,millane1990phase}, microscopy \cite{miao2008extending}, astronomy \cite{fienup1987phase}, coherent diffractive imaging \cite{shechtman2015phase,gerchberg1972practical} and optics \cite{walther1963question} etc.
A popular model for recovering $\vx_0$ is
\begin{equation}\label{eq:mod1}
\argmin{\vx\in \F^d} \| \abs{A\vx}-\vy\|_2.
\end{equation}
If  $\vx_0$ is sparse, both the constrained nonlinear Lasso model
\begin{equation} \label{eq:mod3}
\min_{\vx\in \F^d} \| \abs{A\vx}-\vy\|_2 \quad \mathrm{s.t.}\quad \|\vx\|_1\le R,
\end{equation}
and its non-constrained  version
\begin{equation}\label{eq:model2}
\min_{\vx\in \F^d} \| \abs{A\vx}-\vy\|_2^2+\lambda \|\vx\|_1,
\end{equation}
have been considered for  recovering $\vx_0$. As we will see later, one already develops
many efficient algorithms to solve (\ref{eq:mod1}). The aim of this paper is to study the performance of (\ref{eq:mod1}) as well as of (\ref{eq:mod3})  and (\ref{eq:model2}) from the theoretical viewpoint.
Particularly, we focus on the question: {\em how well can one  recover $\vx_0$ by solving these above three models?}

\subsection{Algorithms for phase retrieval }

One of the oldest   algorithms for phase retrieval is the error-reduction algorithm which is raised
in \cite{gerchberg1972practical,ER3}. The error-reduction algorithm is to solve the following model
\begin{equation}\label{eq:ERM}
\min_{\vx\in \F^d, C\in \F^{m\times m}} \| A\vx-C\vy\|_2,
\end{equation}
where $C={\rm diag}(c_1,\ldots,c_m)$ with $\abs{c_j}=1, j=1,\ldots,m$.
The error-reduction
is an alternating projection algorithm that iterates between $C$ and $\vx$.
A simple observation  is that  $\vx^\#$ is a solution to (\ref{eq:mod1}) if and only if
$(\vx^\#, {\rm diag}({\rm sign}(A\vx^\#)))$ is a solution to (\ref{eq:ERM}). Hence,  the error-reduction algorithm can be used to solve  (\ref{eq:mod1}). The convergence property of the error-reduction algorithm is studied  in \cite{AltMin,ER4}.
Beyond the error-reduction algorithm, one also develops the generalized gradient descent method for solving (\ref{eq:mod1}) (see \cite{TAF} and \cite{zhang2016reshaped}).

An alternative model for phase retrieval is
\begin{equation}\label{eq:mod2}
\min_{\vx\in \F^d}\,\,\sum_{i=1}^m \xkh{\abs{\nj{\va_i,\vx}}^2-y_i^2}^2.
\end{equation}
Although the objective function in (\ref{eq:mod2}) is non-convex, many computational algorithms turn to be successful actually with a good initialization, such as Gauss-Newton algorithms \cite{Gaoxu}, Kaczmarz algorithms  \cite{tan2017phase}  and trust-region methods \cite{turstregion}. A gradient descent method is applied to solve (\ref{eq:mod2}), which provides  the Wirtinger Flow (WF) \cite{WF} and Truncated Wirtinger Flow (TWF) \cite{TWF}  algorithms. It has been proved that both WF and TWF algorithms linearly converge to the true solution up to a global phase.  For the sparse phase retrieval, a standard $\ell_1$ norm term is added to the above objective functions to obtain the models for sparse phase retrieval, such as  (\ref{eq:mod3}) and (\ref{eq:model2}). Similarly, the gradient descent method with thresholding can be used to solve those models successfully \cite{cai2016optimal,SparseTAF}.

 One convex method to handle phase retrieval problem is PhaseLift \cite{phaselift} which lifts the quadratic system to recover a  rank-1 positive semi-definite matrix by solving a semi-definite programming. An alternative convex method is PhaseMax \cite{phasemax} which recasts this problem as a linear programming by  an anchor vector.


\subsection{Our contributions} \label{contribution}
The aim of this paper is to study the estimation  performance of the nonlinear least squares for phase retrieval.  We obtain the measurement vector $\vy=|A\vx_0|+\eta$, where $A=[\va_1,\ldots,\va_m]^\T$ is the measurement matrix with $\va_j \in \Rd, \vx_0\in \Rd$ and $\eta \in \R^m$ is a noise vector. We would like to estimate $\vx_0$ from $\vy$.

Firstly, we consider  the following non-linear least square model:
\begin{equation} \label{square model}
  \mathop{\min} \limits_{\vx \in \Rd} \quad \left\||A\vx|-\vy\right\|^2.
\end{equation}
One of main results is the following theorem which shows that the reconstruction  error of model (\ref{square model}) can be reduced proportionally to $\norm{\ve}/\sqrt{m}$ and it becomes quite small when $\|\eta\|_2$ is bounded and $m$ is large.

\begin{theorem} \label{result1}
Suppose that $A\in \R^{m\times d}$ is a Gaussian random matrix whose entries are independent Gaussian random variables.
We assume that  $m \gtrsim d$. The following holds with probability at least $1-3\exp(-c m)$. For any fixed vector $\vx_0\in \Rd$, suppose that  $\x \in \Rd$ is any solution  to  (\ref{square model}). Then \begin{equation}\label{eq:orderm}
  \min \left\{\norm{\x-\vx_0}, \norm{\x+\vx_0}\right\} \lesssim  \frac{\|\eta\|_2}{\sqrt{m}}.
\end{equation}
\end{theorem}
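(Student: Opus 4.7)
The plan rests on two ingredients. From the optimality of $\x$, $\||A\x|-\vy\|_2 \le \|\,|A\vx_0|-\vy\|_2 = \|\eta\|_2$, so by the triangle inequality $\|\,|A\x|-|A\vx_0|\,\|_2 \le 2\|\eta\|_2$. What remains is a uniform lower bound of the form $\|\,|A\vx_1|-|A\vx_2|\,\|_2 \gtrsim \sqrt{m}\cdot\min(\|\vx_1-\vx_2\|,\|\vx_1+\vx_2\|)$ holding with high probability for every pair $\vx_1,\vx_2\in\Rd$; specializing to $\vx_1=\x$, $\vx_2=\vx_0$ yields \eqref{eq:orderm}.

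Using the pointwise identity $(|a|-|b|)^2=\min\{(a-b)^2,(a+b)^2\}$, the square of the left side rewrites as $\sum_j \min\{(\va_j^\T\vu)^2,(\va_j^\T\vv)^2\}$ with $\vu=\vx_1-\vx_2$ and $\vv=\vx_1+\vx_2$. Thus the core task becomes the key inequality: with probability at least $1-3\exp(-cm)$, for all $\vu,\vv\in\Rd$,
\[
\sum_{j=1}^m \min\bigl\{(\va_j^\T \vu)^2,(\va_j^\T \vv)^2\bigr\} \;\ge\; c_0\, m\cdot \min(\|\vu\|^2,\|\vv\|^2).
\]
I would reduce this to the case $\|\vu\|=\|\vv\|=1$ via two elementary observations: (a) both sides scale as $s^2$ under the joint rescaling $(\vu,\vv)\mapsto (s\vu,s\vv)$, so we may normalize $\min(\|\vu\|,\|\vv\|)=1$ and assume $\|\vu\|=1\le\|\vv\|$; (b) each summand $\min\{(\va_j^\T\vu)^2,t^2(\va_j^\T\hat{\vv})^2\}$ is monotone non-decreasing in $t=\|\vv\|$, so the infimum along the ray $\vv=t\hat\vv$ is attained at $t=1$.

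For the reduced problem on $S^{d-1}\times S^{d-1}$, I would first compute $\E[\min\{(\va^\T\vu)^2,(\va^\T\vv)^2\}]$ using the bivariate Gaussianity of $(\va^\T\vu,\va^\T\vv)$ and verify that it is bounded below by an absolute constant uniformly in the correlation $\langle\vu,\vv\rangle$: the extremes $\vv=\pm\vu$ yield $\E[X^2]=1$, while $\vv\perp\vu$ yields $\E[\min(X^2,Y^2)]$ for independent standard normals $X,Y$, still a positive constant, and continuity in the correlation handles the interior. Pointwise concentration around this expectation follows from Bernstein's inequality, since each summand is dominated by the sub-exponential $(\va_j^\T\vu)^2$. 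The main obstacle is upgrading this to a \emph{uniform} bound over $S^{d-1}\times S^{d-1}$: a naive $\epsilon$-net plus Lipschitz estimate would employ the bound $\max_j\|\va_j\|^2\lesssim d$ and force $m\gtrsim d\log d$, while the theorem asks only for $m\gtrsim d$. Closing this gap will likely require a chaining/Dudley argument, or the standard phase-retrieval trick of isolating the sparse set of sign-flipping indices where $\operatorname{sign}(\va_j^\T\vu)\neq \operatorname{sign}(\va_j^\T\vv)$ and treating them separately from the bulk indices, for which the functional linearizes into $\|A\vh\|_2^2$ along a fixed direction $\vh$ and the usual Gaussian restricted-isometry bound applies.
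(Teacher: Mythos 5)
Your architecture is correct and is, at the front end, genuinely different from (and cleaner than) the paper's. By passing immediately to $\norm{|A\x|-|A\vx_0|}\le 2\norm{\eta}$ via the triangle inequality, you absorb all of the noise at the outset and never need to control the cross term $\nj{\vh^{-},A_{T_1}^\T \eta_{T_1}-A_{T_2}^\T \eta_{T_2}}$ that the paper handles with a separate operator-norm lemma (Lemma \ref{upper bound}); the paper instead expands $\norm{|A\x|-\vy}^2\le\norm{\eta}^2$ and carries that term through, ending with a quadratic inequality in $\norm{\vh^-}$. Both routes then rest on the same key ingredient, and the step you flag as the ``main obstacle'' is not actually open: your desired bound $\sum_j\min\{(\va_j^\T\vu)^2,(\va_j^\T\vv)^2\}\ge c_0 m\min(\norm{\vu}^2,\norm{\vv}^2)$ follows directly from the Strong RIP of Voroninski--Xu (Lemma \ref{SRIP}, applied with $K_{d,s}$ replaced by the unit ball, $m\gtrsim d$) via exactly the sign-splitting you sketch: the minimum at index $j$ is $(\va_j^\T\vu)^2$ on $T:=\{j:\mathrm{sign}(\va_j^\T\x)=\mathrm{sign}(\va_j^\T\vx_0)\}$ and $(\va_j^\T\vv)^2$ on $T^c$, one of these sets has at least $m/2$ elements, and SRIP lower-bounds $\norm{A_I\vh}^2\gtrsim m\norm{\vh}^2$ uniformly over all $I$ with $|I|\ge m/2$ and all $\vh$. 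This is precisely how the paper avoids the $\log d$ loss you worry about; the union bound over the $2^m$ subsets is paid for by making the per-subset failure probability $e^{-cm}$ with $c$ large enough. Your reduction to the unit sphere, the expectation computation, and the pointwise Bernstein step are all fine but are subsumed by citing SRIP.

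Two small inaccuracies in your final sentence. First, the partition is governed by whether $(\va_j^\T\x)(\va_j^\T\vx_0)\ge 0$, i.e.\ by the signs of $\va_j^\T\x$ and $\va_j^\T\vx_0$, not by the signs of $\va_j^\T\vu$ versus $\va_j^\T\vv$: the latter comparison decides which of $|\va_j^\T\x|,|\va_j^\T\vx_0|$ is larger, not which of $(\va_j^\T\vu)^2,(\va_j^\T\vv)^2$ is smaller. Second, the set of sign-disagreeing indices need not be sparse (if $\x$ is nearly orthogonal to $\vx_0$ it contains a constant fraction of $[m]$), and nothing in the argument requires it to be; what is required is a restricted isometry that holds uniformly over \emph{all} subsets of size at least $m/2$, which is strictly stronger than the ``usual'' fixed-subset Gaussian RIP --- that is the whole point of the Strong RIP lemma. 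With those corrections, your plan closes and yields \eqref{eq:orderm} with the same probability guarantee.
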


The next theorem implies that the reconstruction error in Theorem \ref{result1} is sharp.

\begin{theorem}\label{th:con_lower}
Let $m \gtrsim d$.
Assume that  $\vx_0\in \Rd$  is a fixed vector. Assume that
$\eta\in \R^m$ is a fixed vector which satisfies $\sqrt{2/\pi}\cdot\abs{\sum_{i=1}^m \eta_i}/m\ge \delta_0$ and $\norm{\eta}/\sqrt{m} \le \delta_1$ for some $\delta_0>0$ and $\delta_1> 0$.
 Suppose that $A\in \R^{m\times d}$ is a Gaussian random matrix whose entries are independent Gaussian random variables.  Let $\x$ be any solution to (\ref{square model}). Then there exists a $\epsilon_0>0$ and a constant $c_{\delta_0,\vx_0}>0$ such that the following holds with probability at least $1-6\exp(-c\epsilon_0^2 m)$:
\begin{equation}\label{eq:constant_lower}
  \min \left\{\norm{\x-\vx_0}, \norm{\x+\vx_0}\right\} \ge c_{\delta_0,\vx_0}.
\end{equation}
Here, the constant $c_{\delta_0,\vx_0}$ only depends on $\delta_0$ and $\norm{\vx_0}$.
\end{theorem}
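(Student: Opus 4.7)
The plan is to argue by contradiction: exhibit a point along the ray through $\vx_0$ whose objective $f(\vx):=\|\,|A\vx|-\vy\,\|_2^2$ is strictly smaller than $f$ at any point in a neighborhood of $\pm \vx_0$. Write $\bar\eta := \frac{1}{m}\sum_i\eta_i$, so the hypotheses read $\sqrt{2/\pi}\,|\bar\eta|\ge \delta_0$ and $\|\eta\|_2\le \delta_1\sqrt{m}$. The two statistics to control by concentration are $\alpha := \frac{1}{m}\|A\vx_0\|_2^2$ and $\beta := \frac{1}{m}\sum_{i=1}^{m}|\langle\va_i,\vx_0\rangle|\,\eta_i$. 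Since $\eta$ is fixed and each $\va_i\sim N(0,I_d)$ is independent of it, Gaussian $\chi^2$-concentration gives $\alpha\approx \|\vx_0\|^2$, while a Hoeffding-type bound for a sum of independent subgaussian variables with deterministic coefficients $\eta_i$ gives $\beta\approx \sqrt{2/\pi}\,\|\vx_0\|\,\bar\eta$; each tail event has probability at most $\exp(-c\epsilon_0^2 m)$ once an accuracy $\epsilon_0=\epsilon_0(\delta_0,\delta_1,\|\vx_0\|)$ is fixed. I also use $\|A\|_{\rm op}\le C_A\sqrt{m}$ with the same sort of probability.

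For the upper bound on $\min_{\vx} f$, I restrict to the ray $\vx=c\vx_0$, on which the objective is a quadratic in $c$:
\[
f(c\vx_0)=(c-1)^2 m\alpha-2(c-1) m\beta+\|\eta\|_2^2,
\]
minimized at $c^\star = 1+\beta/\alpha$ with minimum value $\|\eta\|_2^2-m\beta^2/\alpha$. After choosing $\epsilon_0$ small relative to $\delta_0$ and $\|\vx_0\|$, the concentration estimates give $\beta^2/\alpha \ge \delta_0^2/2$, hence
\[
\min_{\vx\in\Rd} f(\vx)\;\le\; f(c^\star \vx_0)\;\le\; \|\eta\|_2^2 - c_1\, m,
\]
for some $c_1=c_1(\delta_0,\|\vx_0\|)>0$.

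For the lower bound on $f$ near $\vx_0$, set $\Delta_i := |\langle\va_i,\vx\rangle| - |\langle\va_i,\vx_0\rangle|$ and expand
\[
f(\vx) = \sum_{i=1}^{m}(\Delta_i-\eta_i)^2 \;\ge\; \|\eta\|_2^2 - 2\sum_{i=1}^{m} \Delta_i\,\eta_i.
\]
The reverse triangle inequality yields $|\Delta_i|\le |\langle\va_i,\vx-\vx_0\rangle|$, so Cauchy--Schwarz together with $\|A\|_{\rm op}\le C_A\sqrt{m}$ and $\|\eta\|_2\le \delta_1\sqrt{m}$ produces
\[
f(\vx)\;\ge\;\|\eta\|_2^2 - 2C_A\delta_1\, m\,\norm{\vx-\vx_0}.
\]
Since $f(-\vx)=f(\vx)$, the identical bound holds with $\vx+\vx_0$ in place of $\vx-\vx_0$. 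Inserting these bounds at any minimizer $\x$ and using $f(\x)\le \|\eta\|_2^2 - c_1 m$ forces
\[
\min\{\norm{\x-\vx_0},\;\norm{\x+\vx_0}\}\;\ge\;\frac{c_1}{2C_A\delta_1}\;=:\;c_{\delta_0,\vx_0},
\]
and a union bound over the three good events (operator norm of $A$, concentration of $\alpha$, concentration of $\beta$) yields the $6\exp(-c\epsilon_0^2 m)$ failure probability stated in the theorem.

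The main technical point I expect to need care is the concentration of $\beta$: because $\eta$ is deterministic and not assumed subgaussian, I must apply Hoeffding/Bernstein to $\sum_i(|\langle\va_i,\vx_0\rangle|-\sqrt{2/\pi}\|\vx_0\|)\eta_i$, whose subgaussian norm scales like $\|\vx_0\|\,\|\eta\|_2$. The hypothesis $\|\eta\|_2\le \delta_1\sqrt{m}$ is exactly what keeps the deviation at scale $\epsilon_0\sqrt{m}\cdot\|\vx_0\|\delta_1$ and hence the failure probability of the advertised $\exp(-c\epsilon_0^2 m)$ form, after choosing $\epsilon_0$ comparable to $\delta_0/\delta_1$.
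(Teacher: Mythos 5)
Your argument is genuinely different from the paper's. The paper goes through a first-order optimality (fixed-point) identity $\x=(A^\T A)^{-1}A^\T(\vy\odot s(A\x))$ (Lemma \ref{fixed-point condition}), rotates coordinates so that $\x$ is aligned with $\vve_1$, and lower-bounds $\norm{\x-\vx_0}$ by the first coordinate of $A^\T(\vy\odot s(A\x)-A\vx_0)/(9m)$, which concentrates around $\norm{\vx_0}f(\theta)+\sqrt{2/\pi}\,\overline{\eta}$; a case analysis in $\theta$ and $\mathrm{sign}(\overline{\eta})$ then gives a constant depending only on $\delta_0$ and $\norm{\vx_0}$. Your route instead compares objective values: a test point on the ray $\{c\vx_0\}$ certifies $\min f\le\norm{\eta}^2-c_1m$, while the crude uniform bound $f(\vx)\ge\norm{\eta}^2-2\norms{A}\norm{\eta}\min\{\norm{\vx-\vx_0},\norm{\vx+\vx_0}\}$ shows $f$ cannot be that small near $\pm\vx_0$. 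This has a real advantage: your lower bound holds uniformly in $\vx$ on the event $\norms{A}\le C_A\sqrt{m}$, so you avoid the delicate point in the paper's Lemma \ref{Th:lower bound} where concentration is applied in a frame aligned with the data-dependent minimizer $\x$.

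There are, however, three concrete issues. First, and most substantively, your final constant is $c_1/(2C_A\delta_1)$, which degrades as $\delta_1$ grows, whereas the theorem asserts that $c_{\delta_0,\vx_0}$ depends only on $\delta_0$ and $\norm{\vx_0}$; this $\delta_1^{-1}$ loss is intrinsic to bounding $\sum_i\Delta_i\eta_i$ by Cauchy--Schwarz against $\norm{\eta}$, so your argument proves a strictly weaker (though still $m$-independent and positive) bound than the stated one. Second, the minimizer of your quadratic on the ray is at $|c|=1+\beta/\alpha$, which is only realizable when $1+\beta/\alpha\ge 0$; when $\overline{\eta}<0$ and $|\beta|>\alpha$ you must fall back to $c=0$, where $f(0)=m\alpha+2m\beta+\norm{\eta}^2\le\norm{\eta}^2-m\alpha$, so the conclusion survives but the case needs to be written out. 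Third, the entire test-point construction degenerates when $\vx_0=0$ (then $\alpha=\beta=0$ and the ray is a point), so that case is simply not covered by your argument; the paper handles it separately via Lemma \ref{Th:lower bound}. With the second point patched, a constant of the form $c\min\{\delta_0^2,\norm{\vx_0}^2\}/\delta_1$ for $\vx_0\ne 0$ is what your method delivers.
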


\begin{remark} \label{lowbound_optimal}
We next explain the reason why the error bound in Theorem \ref{result1} is sharp up to a constant.
For the aim of contradiction, we assume that there  exists a $\alpha>0$ such that
\begin{equation}\label{eq:ascon}
  \min \left\{\norm{\x-\vx_0}, \norm{\x+\vx_0}\right\} \lesssim \frac{\norm{\eta}}{m^{1/2+\alpha}}\quad \text{ for } m\gtrsim d,
\end{equation}
holds   for any fixed  $\vx_0\in {\mathbb R}^d$ with high probability. Here,   $\x \in \Rd$  is any solution to  (\ref{square model}) which depends on $\vx_0$ and $\eta$.
We assume
\begin{equation*}
 \varliminf _{m\to \infty }\abs{\sum_{i=1}^m \eta_i/m}\ge \delta_0 \quad \mathrm{and} \quad \varlimsup _{m\to \infty } \norm{\eta}/\sqrt{m} \le \delta_1
\end{equation*}
where $\delta_0,\delta_1>0$. For example, if we  take $\eta=(1,\ldots,1)^T\in {\mathbb R}^m$, then $\delta_0=\delta_1=1$. For a fixed $\vx_0\in \R^d$, Theorem \ref{th:con_lower} implies the following  holds
with high probability
\begin{equation}\label{eq:remark_lower}
  \min \left\{\norm{\x-\vx_0}, \norm{\x+\vx_0}\right\} \ge c_{\delta_0,\vx_0}, \quad \text{ for } m\gtrsim d,
\end{equation}
where  $c_{\delta_0,\vx_0}>0$.
However, the (\ref{eq:ascon}) implies that
\begin{equation*}
  \min \left\{\norm{\x-\vx_0}, \norm{\x+\vx_0}\right\}\lesssim \frac{\delta_1}{m^\alpha} \to 0, \quad m\to \infty,
\end{equation*}
which contradicts with (\ref{eq:remark_lower}). Hence, (\ref{eq:ascon}) does not hold.
\end{remark}
We next turn to the phase retrieval for sparse signals. Here, we assume that $\vx_0\in \R^d$ is $s$-sparse, which means that there are at most $s$ nonzero entries in $\vx_0$.
 We first consider the estimation performance of the following constrained nonlinear Lasso model
\begin{equation} \label{square model with sparse with constrain}
\min_{\vx\in \R^d} \| \abs{A\vx}-\vy\|_2 \quad \mathrm{s.t.}\quad \|\vx\|_1\le R,
\end{equation}
where $R$ is a parameter which specifies a desired sparsity level of the solution.   The following theorem presents the estimation performance  of model (\ref{square model with sparse with constrain}):
\begin{theorem} \label{th:sparse constrain}
Suppose  that $A\in \R^{m\times d}$ is a Gaussian random matrix whose entries are independent Gaussian random variables. If $m \gtrsim s\log(ed/s)$, then the following holds with probability at least $1-3\exp(-c_0 m)$ where $c_0>0$ is a constant. For any fixed $s$-sparse vector $\vx_0\in \Rd$,  suppose that  $\x \in \Rd$ is any solution to  (\ref{square model with sparse with constrain}) with parameter $R:=\norms{\vx_0}_1$ and $\vy=\abs{A\vx_0}+\eta$. Then
\begin{equation*}
  \min \left\{\norm{\x-\vx_0}, \norm{\x+\vx_0}\right\}\,\, \lesssim\,\,  \frac{\norm{\omega}}{\sqrt{m}}.
\end{equation*}
\end{theorem}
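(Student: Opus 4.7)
The plan is to adapt the proof of Theorem \ref{result1} to the constrained $\ell_1$ setting by combining the optimality of $\x$ with the standard Lasso-type cone inclusion, and then invoking a sparse restricted analogue of the uniform non-linear lower bound that underlies Theorem \ref{result1}. First, since $R=\normone{\vx_0}=\normone{-\vx_0}$, both $\vx_0$ and $-\vx_0$ are feasible for \eqref{square model with sparse with constrain}. Because $\abs{A(\pm\vx_0)}=\abs{A\vx_0}$, optimality of $\x$ gives $\norm{\abs{A\x}-\vy}\le\norm{\abs{A\vx_0}-\vy}=\norm{\eta}$, and the triangle inequality then produces $\norm{\abs{A\x}-\abs{A\vx_0}}\le 2\norm{\eta}$.

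Next, I would set up a cone constraint on the error. Choose $\sigma\in\{-1,+1\}$ so that $\norm{\x-\sigma\vx_0}=\min\{\norm{\x-\vx_0},\norm{\x+\vx_0}\}$, and set $\vh:=\x-\sigma\vx_0$. Let $S:=\supp(\vx_0)$, so $\abs{S}\le s$. Splitting along $S$ and $S^c$, using that $\sigma\vx_0$ is supported on $S$, and invoking feasibility $\normone{\x}\le R=\normone{\sigma\vx_0}$, one obtains
\[
\normone{\sigma\vx_0}\ge\normone{\x}=\normone{\sigma\vx_0+\vh_S}+\normone{\vh_{S^c}}\ge\normone{\sigma\vx_0}-\normone{\vh_S}+\normone{\vh_{S^c}},
\]
whence the standard cone relation $\normone{\vh_{S^c}}\le\normone{\vh_S}$ and therefore $\normone{\vh}\le 2\normone{\vh_S}\le 2\sqrt{s}\,\norm{\vh}$; that is, $\vh\in\mathcal{K}_s:=\{\vz\in\Rd:\normone{\vz}\le 2\sqrt{s}\,\norm{\vz}\}$.

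The crux is a sparse analogue of the lower bound behind the proof of Theorem \ref{result1}: whenever $m\gtrsim s\log(ed/s)$, with probability at least $1-3\exp(-c_0 m)$, for every $s$-sparse $\vu\in\Rd$ and every $\vz\in\mathcal{K}_s$,
\[
\norm{\abs{A(\vu+\vz)}-\abs{A\vu}}\gtrsim\sqrt{m}\cdot\min\{\norm{\vz},\norm{2\vu+\vz}\}.
\]
Applying this with $\vu=\sigma\vx_0$ and $\vz=\vh$---and noting that $\min\{\norm{\vh},\norm{2\sigma\vx_0+\vh}\}=\norm{\vh}$ by the choice of $\sigma$---and combining with the first paragraph gives $\sqrt{m}\,\norm{\vh}\lesssim 2\norm{\eta}$, which is the desired conclusion.

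The main obstacle is proving the restricted lower bound just stated. I would run the chaining argument underlying Theorem \ref{result1} on the set of pairs $(\vu,\vz)$ with $\vu$ an $s$-sparse unit vector and $\vz\in\mathcal{K}_s\cap\{\vz:\norm{\vz}=1\}$. Pointwise, $m^{-1}\norm{\abs{A(\vu+\vz)}-\abs{A\vu}}^2$ concentrates around its mean, which dominates $\min\{\norm{\vz}^2,\norm{2\vu+\vz}^2\}$ up to a universal constant via the Gaussian formula for $\E\abs{\jkh{\va,\vu}}\abs{\jkh{\va,\vz}}$. Since the Gaussian width of $\mathcal{K}_s\cap\{\vz:\norm{\vz}=1\}$ is of order $\sqrt{s\log(ed/s)}$, a Maurey-type discretisation together with an $\epsilon$-net over the supports of $\vu$ (of total log-cardinality $O(s\log(ed/s))$), combined with a continuity estimate that absorbs the non-smoothness of $\abs{\cdot}$, yields the bound provided $m\gtrsim s\log(ed/s)$. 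Patching this uniform deviation estimate simultaneously in $\vu$ and $\vz$ so as to cover every $s$-sparse $\vu$ and every $\vz\in\mathcal{K}_s$ is the technically most delicate step.
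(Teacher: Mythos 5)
Your overall architecture is appealing and in one respect simpler than the paper's: by using feasibility of $\pm\vx_0$ and the triangle inequality to get $\norm{\abs{A\x}-\abs{A\vx_0}}\le 2\norm{\eta}$, you avoid the cross term $\nj{\vh^-,A_{T_1}^\T\eta_{T_1}-A_{T_2}^\T\eta_{T_2}}$ that the paper must control with a separate Gaussian-width concentration lemma (Lemma \ref{upper bound sparse cons}, which requires a union bound over all $2^m$ subsets $T$). However, the restricted lower bound you rely on is false as stated. You claim that for every $s$-sparse $\vu$ and every $\vz$ with $\normone{\vz}\le 2\sqrt{s}\norm{\vz}$ one has $\norm{\abs{A(\vu+\vz)}-\abs{A\vu}}\gtrsim\sqrt{m}\min\{\norm{\vz},\norm{2\vu+\vz}\}$. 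Take $d>m$, let $\vv\in\ker A$ be a dense unit vector, let $\vu$ be an $s$-sparse unit vector, and set $\vz:=-2\vu+\epsilon\vv$. For $\epsilon$ small enough $\vz$ lies in your cone $\mathcal{K}_s$ (since $\normone{-2\vu}\le\sqrt{s}\norm{-2\vu}$, with room to spare), yet $A(\vu+\vz)=-A\vu$, so the left-hand side is $0$ while the right-hand side is of order $\sqrt{m}\,\epsilon>0$. The hypothesis that $\vu$ is $s$-sparse gives no control on the vector $2\vu+\vz$ once its norm is small compared with $\norm{\vu}$, so the chaining argument you sketch as the ``technically most delicate step'' would be attempting to prove a false statement.

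The repair is short and brings you back to what the paper actually uses. Your own cone computation, run with $-\vx_0$ in place of $\vx_0$ (legitimate because $\normone{-\vx_0}=\normone{\vx_0}=R$), shows that \emph{both} error vectors $\vh^{\pm}:=\x\pm\vx_0$ satisfy $\normone{\vh^{\pm}_{S^c}}\le\normone{\vh^{\pm}_{S}}$ and hence $\normone{\vh^{\pm}}\le 2\sqrt{s}\norm{\vh^{\pm}}$. Then, using the coordinatewise identity $\big|\abs{a}-\abs{b}\big|=\min\{\abs{a-b},\abs{a+b}\}$, one gets $\norm{\abs{A\x}-\abs{A\vx_0}}^2=\norm{A_I\vh^-}^2+\norm{A_{I^c}\vh^+}^2$ for a suitable index set $I$; one of $I,I^c$ has at least $m/2$ elements, and the Strong RIP (Lemma \ref{SRIP} together with Remark \ref{re:SRIP}) applied to whichever of $\vh^-,\vh^+$ is relevant gives $\norm{\abs{A\x}-\abs{A\vx_0}}^2\ge cm\min\{\norm{\vh^-}^2,\norm{\vh^+}^2\}$ with probability at least $1-\exp(-c_0m)$ once $m\gtrsim s\log(ed/s)$. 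Combined with your bound $\norm{\abs{A\x}-\abs{A\vx_0}}\le 2\norm{\eta}$, this yields the theorem; no new net or chaining construction is required, and the argument is then a genuine (and slightly leaner) variant of the paper's proof.
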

The  unconstrained Lagrangian version of (\ref{square model with sparse with constrain}) is
 \begin{equation}\label{square model with sparse}
   \mathop{\min} \limits_{\vx \in \Rd} \quad \left\||A\vx|-\vy\right\|^2+\lambda\norms{\vx}_1,
 \end{equation}
 where $\lambda>0$ is a parameter which depends on the desired level of sparsity.
The following theorem  presents  the estimation  performance of model (\ref{square model with sparse}):
\begin{theorem} \label{th:sparse}
Suppose that $A\in \R^{m\times d}$ is a Gaussian random matrix whose entries are independent Gaussian random variables. If $m\gtrsim s\log (ed/s)$, then the following holds with probability at least $1-\exp(-c_0 m)-1/d^2$ where $c_0>0$ is a constant.
For any fixed $s$-sparse vector $\vx_0\in \Rd$,  suppose that  $\x \in \Rd$ is any solution to
(\ref{square model with sparse}) with the positive parameter $\lambda \gtrsim  \normone{\eta}+\norm{\eta}\sqrt{\log d}$ and  $\vy=\abs{A\vx_0}+\eta$.
Then
\begin{equation}\label{eq:nonlam}
  \min \left\{\norm{\x-\vx_0}, \norm{\x+\vx_0}\right\} \lesssim \frac{\lambda\sqrt{s}}{m}+\frac{\norm{\omega}}{\sqrt{m}}.
\end{equation}
\end{theorem}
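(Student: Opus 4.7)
The plan is to run a Lasso-style oracle argument in which restricted strong convexity is supplied by a sparse analogue of the stable lower isometry behind Theorem~\ref{result1}. Replacing $\x$ with $-\x$ if necessary, assume $\norm{\x-\vx_0}\le\norm{\x+\vx_0}$, and set $\vh:=\x-\vx_0$ and $S:=\supp(\vx_0)$. Substituting $\vy=\abs{A\vx_0}+\eta$ into the optimality $\norm{\abs{A\x}-\vy}^2+\lambda\normone{\x}\le\norm{\abs{A\vx_0}-\vy}^2+\lambda\normone{\vx_0}$, expanding, and applying the standard triangle split $\normone{\vx_0}-\normone{\x}\le\normone{\vh_S}-\normone{\vh_{S^c}}$, I obtain the basic inequality
\begin{equation}\label{eq:planbasic}
\norm{\abs{A\x}-\abs{A\vx_0}}^2+\lambda\normone{\vh_{S^c}}\le 2\abs{\langle\eta,\abs{A\x}-\abs{A\vx_0}\rangle}+\lambda\normone{\vh_S}.
\end{equation}

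To control the cross term I would use the decomposition $\abs{A\x}-\abs{A\vx_0}=\sigma\odot A\vh+r$, where $\sigma_i:={\rm sign}(\langle\va_i,\vx_0\rangle)$ and $r\ge 0$ is supported on the sign-flip set $\{i:{\rm sign}(\langle\va_i,\x\rangle)\ne\sigma_i\}$ with $r_i\le 2\abs{\langle\va_i,\vh\rangle}$, so that $\langle\eta,\abs{A\x}-\abs{A\vx_0}\rangle=\langle A^\T(\eta\odot\sigma),\vh\rangle+\langle\eta,r\rangle$. The first piece I would bound via H\"older as $\|A^\T(\eta\odot\sigma)\|_\infty\normone{\vh}$, and, since $\|\eta\odot\sigma\|_2=\norm{\eta}$, a Gaussian maximum inequality (handled by conditioning on $\sigma$ to remove its $A$-dependence) yields $\|A^\T(\eta\odot\sigma)\|_\infty\lesssim\norm{\eta}\sqrt{\log d}$ on an event of probability at least $1-1/d^2$; the second I would bound by $\normone{\eta}\,\|r\|_\infty\le 2\normone{\eta}\|A\vh\|_\infty$, with $\|A\vh\|_\infty$ controlled via a Gaussian estimate on $\max_{i,j}\abs{a_{ij}}$. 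These two estimates are exactly what motivates the hypothesis $\lambda\gtrsim\normone{\eta}+\norm{\eta}\sqrt{\log d}$, and inserting them into \eqref{eq:planbasic} forces $\normone{\vh_{S^c}}\lesssim\normone{\vh_S}\le\sqrt{s}\norm{\vh}$, placing $\vh$ in a cone of approximately $s$-sparse vectors.

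The main technical obstacle is the restricted lower isometry: with probability at least $1-\exp(-c_0 m)$, uniformly over all $s$-sparse $\vx_0$ and all $\x\in\Rd$ with $\vh$ in the above cone,
\[
\norm{\abs{A\x}-\abs{A\vx_0}}^2\gtrsim m\min\bigl\{\norm{\x-\vx_0}^2,\,\norm{\x+\vx_0}^2\bigr\}.
\]
This is the sparse analogue of the isometry underlying Theorem~\ref{result1}; the plan is to establish it by Gaussian concentration at each point of a $\rho$-net of cardinality $\binom{d}{s}(C/\rho)^{s}$ on the $s$-sparse unit sphere, together with a union bound and a short continuity argument. This step is exactly what drives the sample-complexity requirement $m\gtrsim s\log(ed/s)$.

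Finally I combine. Estimating the cross term by Cauchy--Schwarz as $\abs{\langle\eta,\abs{A\x}-\abs{A\vx_0}\rangle}\le\norm{\eta}\,\norm{\abs{A\x}-\abs{A\vx_0}}$ and absorbing half of the left-hand side of \eqref{eq:planbasic} via $2ab\le\tfrac{1}{2}a^2+2b^2$ gives $\tfrac{1}{2}\norm{\abs{A\x}-\abs{A\vx_0}}^2\le 2\norm{\eta}^2+\lambda\sqrt{s}\norm{\vh}$. Substituting the sparse lower isometry turns this into $m\norm{\vh}^2\lesssim\norm{\eta}^2+\lambda\sqrt{s}\norm{\vh}$, and solving this quadratic in $\norm{\vh}$ produces the bound $\norm{\vh}\lesssim\norm{\eta}/\sqrt{m}+\lambda\sqrt{s}/m$ claimed in \eqref{eq:nonlam}.
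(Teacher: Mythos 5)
Your overall architecture (basic inequality from optimality, a cone condition forced by the size of $\lambda$, a restricted lower isometry, then a quadratic inequality in $\norm{\vh}$) matches the paper's, and your final Cauchy--Schwarz combination is clean and correct. The genuine gap is in the cross-term control, which is the crux of the theorem and the reason the hypothesis on $\lambda$ has the form it does. First, the claim that conditioning on $\sigma$ lets a Gaussian maximum inequality give $\|A^\T(\eta\odot\sigma)\|_\infty\lesssim\norm{\eta}\sqrt{\log d}$ is false: $\sigma_i={\rm sign}(\nj{\va_i,\vx_0})$ is a function of $\va_i$, so conditioning does not decouple it from $A$, and for a coordinate $j\in\supp(\vx_0)$ the products $\sigma_i a_{ij}$ are not mean zero. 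Taking $\vx_0=\vve_1$ gives $\sigma_i a_{i1}=\abs{a_{i1}}$, so $(A^\T(\eta\odot\sigma))_1=\sum_i\eta_i\abs{a_{i1}}$ concentrates around $\sqrt{2/\pi}\sum_i\eta_i$, which can be of order $\normone{\eta}$. The correct bound is $\normone{\eta}+\norm{\eta}\sqrt{\log d}$ (the paper's Lemma \ref{upper bound sparse uncons}, obtained by bounding $\max_j\sum_i\abs{\eta_i}\abs{a_{ij}}$ via Hoeffding); the unavoidable $\normone{\eta}$ term here is exactly why the theorem requires $\lambda\gtrsim\normone{\eta}+\norm{\eta}\sqrt{\log d}$ and why removing it is stated as an open conjecture --- your claimed estimate would essentially resolve that conjecture. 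Second, your bound $\jkh{\eta,r}\le\normone{\eta}\norms{r}_\infty\le 2\normone{\eta}\norms{A\vh}_\infty$ with $\norms{A\vh}_\infty\le\max_{i,j}\abs{a_{ij}}\cdot\normone{\vh}$ costs an extra factor $\sqrt{\log(md)}$ on the $\normone{\eta}$ term, so it is not dominated by $\lambda\normone{\vh}$ under the stated hypothesis; as written you would only prove the theorem for a logarithmically larger $\lambda$. Both problems disappear if you bound the entire cross term at once by $\sum_i\abs{\eta_i}\,\abs{\jkh{\va_i,\vh^{\pm}}}\le\normone{\vh^{\pm}}\cdot\max_j\sum_i\abs{\eta_i}\abs{a_{ij}}$, which is what the paper does after splitting the residual exactly over the four sign sets $T_1,\ldots,T_4$ and keeping both $\vh^-=\x-\vx_0$ and $\vh^+=\x+\vx_0$ instead of lumping the sign flips into a remainder $r$.

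A secondary imprecision is in the restricted lower isometry. Since $\x$ is not $s$-sparse but only effectively sparse (it lies in a shifted copy of the cone), a $\rho$-net of cardinality $\binom{d}{s}(C/\rho)^{s}$ on the $s$-sparse unit sphere does not cover the relevant set; you need nets of $K_{d,Cs}$, with covering numbers $\exp(Cs\log(ed/s)/\rho^2)$ as in Remark \ref{re:SRIP}. Moreover, the uniform statement you want is essentially the strong RIP of Lemma \ref{SRIP}: one of the sign-agreement sets $T_{12}$, $T_{34}$ contains at least $m/2$ rows, and $\norm{\abs{A\x}-\abs{A\vx_0}}^2\ge\norm{A_{T_{12}}\vh^-}^2$ (resp.\ $\norm{A_{T_{34}}\vh^+}^2$) is then bounded below by $cm\norm{\vh^-}^2$ (resp.\ $cm\norm{\vh^+}^2$). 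This requires the cone condition for \emph{both} $\vh^-$ and $\vh^+$ (the paper derives both; from your version for $\vh^-$ one can recover the one for $\vh^+$ using $\vh^+=\vh^-+2\vx_0$ and $\norm{\vh^+}\ge\norm{\vx_0}$ under your normalization, but this step should be made explicit). Rather than re-deriving all of this by hand, the intended route is simply to invoke Lemma \ref{SRIP}.
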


We can use a similar method to that in Remark (\ref{lowbound_optimal}) to show that the reconstruction error in Theorem \ref{th:sparse constrain} is sharp. In Theorem \ref{th:sparse}, one requires that $\lambda \gtrsim  \normone{\eta}+\norm{\eta}\sqrt{\log d}$.
Motivated by a lot of numerical experiments, we conjecture that Theorem \ref{th:sparse}  still holds provided  $\lambda \gtrsim  \norm{\eta}\sqrt{\log d}$. If the conjecture holds, then we can take $\lambda \thickapprox \norm{\eta}\sqrt{\log d}$ and replace (\ref{eq:nonlam}) by
\[
  \min \left\{\norm{\x-\vx_0}, \norm{\x+\vx_0}\right\} \lesssim \frac{\norm{\omega}}{\sqrt{m}}.
\]

\subsection{Comparison to related works }
\subsubsection{Least squares}
We first introduce the estimation  of signals from the noisy linear measurements.
Suppose that $\vx_0\in \R^d$ is the target signals. Set
\begin{equation*}
  \vy'=A\vx_0+\ve,
\end{equation*}
where $A \in \R^{m\times d}$ is the measurement matrix and $\ve \in \R^m$ is a noise vector.
We suppose that $A$ is a Gaussian random matrix with entries $a_{jk}\sim N(0,1)$ and we also
suppose that $m\gtrsim d$.
A popular method for recovering $\vx_0$ from $\vy'$ is the least squares:
\begin{equation} \label{linear lasso without sparse}
  \min_{\vx\in \Rd} \norm{A\vx-\vy'}^2.
\end{equation}
Then the solution of model (\ref{linear lasso without sparse}) is $\xx=(A^\T A)^{-1}A^\T \hy$, which implies that
\begin{equation*}
\xx-\vx_0=(A^\T A)^{-1}A^\T \eta.
\end{equation*}
Thus with probability at least $1-4\exp(-c d)$ one has
\begin{equation*}
\norm{\xx-\vx_0}=\norm{(A^\T A)^{-1}A^\T \eta}\le \norm{(A^\T A)^{-1}}\norm{A^\T \eta}\lesssim\frac{\sqrt{d}}{m}\norm{\eta},
\end{equation*}
where the last inequality follows from the fact that $\norm{A^\T \eta}\le 3\sqrt{d}\norm{\eta}$ and $\lambda_{\min}(A) \ge O(\sqrt{m})$ hold with probability at least $1-4\exp(-c d)$ for any Gaussian random matrix \cite[Theorem 7.3.3]{Vershynin2018}.
Then the following holds with high probability
\begin{equation}\label{eq:error}
 \norm{\xx-\vx_0} \lesssim \frac{\sqrt{d}\|\eta\|_2}{m},
\end{equation}
where $\xx$ is the solution of (\ref{linear lasso without sparse}).

For non-linear least squares with phaseless measurement $\vy=|A \vx_0|+\eta$, we consider
\begin{equation} \label{mol:nonlinear}
\min_{\vx \in \Rd} \|\abs{A\vx}-\vy\|.
\end{equation}
Theorem \ref{result1} implies that
 \begin{equation}\label{eq:plerror}
   \min \left\{\norm{\x-\vx_0}, \norm{\x+\vx_0}\right\}\,\, \lesssim\,\, \frac{\|\eta\|_2}{\sqrt{m}}
 \end{equation}
where $\x$ is any solution to (\ref{mol:nonlinear}).
Remark \ref{lowbound_optimal} implies that the upper bound is sharp.
Note that the error order about $m$ for nonlinear least squares is $O(1/\sqrt{m})$
while  one for  least squares is $O(1/m)$.
Hence, the result in  Theorem \ref{result1}
highlights an essential difference between linear least square model (\ref{linear lasso without sparse}) and  the non-linear least square model (\ref{mol:nonlinear}).

\subsubsection{Lasso}

If  assume that the signal $\vx_0$ is $s$-sparse and   $\vy'=A\vx_0+\ve$, one turns to the Lasso
\begin{equation} \label{lasso:sparse uncons}
\min_{\vx\in \R^d} \| A\vx-\vy'\|_2 \quad \mathrm{s.t.}\quad \|\vx\|_1\le R.
\end{equation}
If $m \gtrsim s\log d$, then  the solution $\xx$ of (\ref{lasso:sparse uncons}) satisfies
\begin{equation}\label{eq:lassoer}
  \norm{\xx-\vx_0} \lesssim \|\eta\|_2 \sqrt{s\log d}/m
\end{equation}
with high probability (see \cite{Vershynin2018}).

For the nonlinear Lasso, Theorem \ref{th:sparse constrain} shows that any solution $\x$ to  $\min_{\normone{\vx} \le \normone{\vx_0}} \|\abs{A\vx}-\vy\|$ with $\vy=|A \vx_0|+\eta$ satisfies
 \begin{equation}\label{eq:nlassoer}
   \min \left\{\norm{\x-\vx_0}, \norm{\x+\vx_0}\right\} \lesssim \norm{\eta}/\sqrt{m}
 \end{equation}
with high probability.
Comparing (\ref{eq:lassoer}) with (\ref{eq:nlassoer}), we find that the reconstruction error of Lasso is similar
to that of nonlinear Lasso when $m=O(s\log d)$, while Lasso has the better performance
over the nonlinear Lasso provided  $m\gg s\log d$.

\subsubsection{Unconstrained  Lasso}

We next turn to the unconstrained  Lasso
\begin{equation}\label{model:uncons}
  \min_{\vx \in \Rd} \|A\vx-\vy'\|^2+ \lambda \normone{\vx}
\end{equation}
where $\vy'=A \vx_0+\eta$ and $\vx_0$ is a $s$-sparse vector. If the parameter $\lambda \gtrsim \norm{\eta}\sqrt{\log d}$, then  $\xx$ satisfies
 \[
\norm{\xx-\vx_0}\lesssim \frac{ \lambda \sqrt{s}}{m}
\]
 with high probability (see \cite{Vershynin2018}) where $\xx$ is the solution of (\ref{model:uncons}).

 For  the sparse phase retrieval model
\begin{equation}\label{eq:sparselasso}
\min_{\vx \in \Rd} \|\abs{A\vx}-\vy\|^2+ \lambda \normone{\vx}
\end{equation}
 with $\vy=|A \vx_0|+\eta$,  Theorem \ref{th:sparse} shows that
 \begin{equation}\label{eq:sparselamda}
  \min \left\{\norm{\x-\vx_0}, \norm{\x+\vx_0}\right\} \lesssim \frac{\lambda\sqrt{s}}{m}+\frac{\norm{\omega}}{\sqrt{m}}
\end{equation}
where the parameter $\lambda \gtrsim  \normone{\eta}+\norm{\eta}\sqrt{\log d}$
and $\x$ is any solution to (\ref{eq:sparselasso}).
Our  result requires that the parameter $\lambda$ in nonlinear Lasso model is larger than linear case.

\subsubsection{The generalized Lasso with nonlinear observations }
 In \cite{plan2016generalized}, Y. Plan and R. Vershynin consider the following non-linear observations
\begin{equation*}
  y_j=f_j(\nj{\va_j,\vx_0}),\quad j=1,\ldots,m
\end{equation*}
where $f_j: \R\to \R$ are independent copies of an unknown random or deterministic function $f$ and $\va_j\in \Rd, j=1,\ldots,m,$ are Gaussian random vectors. The $K$-Lasso model is employed to recover $\vx_0$ from $y_j, j=1,\ldots,m$:
\begin{equation}\label{eq:nonlinear}
  \min_{\vx\in \Rd} \norm{A\vx-\vy}^2  \quad \mathrm{s.t.}\quad \vx\in K,
\end{equation}
where $K\subset \Rd$ is some known set. Suppose that $\x$ is the solution to (\ref{eq:nonlinear}). Y. Plan and R. Vershynin  \cite{plan2016generalized} show that $\|\x-\mu\cdot \vx_0\|$ tends to 0 with $m$ tending to infinity, where $\mu={\mathbb E}(f(g)g)$ with $g$ being a Gaussian random variable.
Unfortunately, applying the  result to phase retrieval problem, it gives that $\mu={\mathbb E}(|g|\cdot g)=0$  and hence $\|\x\|$ tends to 0 with  $m$ tending to infinity where
$\x$ is the solution to the least square mode  (\ref{eq:nonlinear}) with $K=\Rd$ and $y_j=\abs{\nj{\va_j,\vx_0}}$. This means that the generalized Lasso does not work for phase retrieval.
Hence, one has to employ the nonlinear  Lasso (or nonlinear least squares) for solving phase retrieval. This is also our motivation for this project.

\subsection{Organization}
The paper is organized as follows.  In Section 2, we introduce some notations and lemmas which are used in this paper.  We provide the proofs of main results  in Section 3.

\section{Preliminaries}
The aim of this section is to  introduce some definitions and lemmas which play a key role in our paper.
\subsection{Gaussian width}
For a subset $T\subset \R^d$, the Gaussian width is defined as
\begin{equation*}
  w(T):= \E \sup_{\vx\in T} \jkh{g,\vx} \quad \mathrm{where} \quad g \sim N(0,I_d).
\end{equation*}
The Gaussian width $w(T)$ is one of the basic geometric quantities associated with the subset $T\subset \R^d$ (see \cite{Vershynin2018}). We now give several examples about Gaussian width. The first example is Euclidean unit ball $\mathbb{S}^{d-1}$, where a simple calculation leads to
\begin{equation*}
  w(\mathbb{S}^{d-1})=O(\sqrt{d}).
\end{equation*}
Another example is the unit $\ell_1$ ball $B_1^d$ in $\Rd$. It can be showed that (see e.g. \cite{Vershynin2018})
\begin{equation*}
  w(B_1^d)=O(\sqrt{\log d}).
\end{equation*}
In this paper, we often use the following set
\begin{equation*}
K_{d,s}:=\dkh{\vx \in \Rd: \norm{\vx}\le 1, \quad \normone{\vx}\le \sqrt{s}},
\end{equation*}
with the Gaussian width  $w(K_{d,s})=O(\sqrt{s\log(ed/s)})$ (see e.g. \cite{Vershynin2018}).

\subsection{Gaussian Concentration Inequality}
\begin{lemma}\cite{Vershynin2018} \label{Gaussian space}
Consider a random vector $X\sim N(0,I_d)$ and a Lipschitz function $f:\Rd \to \R$ with constant $\norms{f}_{\mathrm{Lip}}$: $\abs{f(X)-f(Y)}\le \norms{f}_{\mathrm{Lip}}\cdot \norm{X-Y}$. Then for every $t\ge 0$, we have
\begin{equation*}
\PP\dkh{\abs{f(X)-\E f(X)}\ge t}\le 2\exp\left(-\frac{ct^2}{\norms{f}_{\mathrm{Lip}}}\right).
\end{equation*}

\end{lemma}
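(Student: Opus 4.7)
The plan is to prove the inequality via the classical Maurey--Pisier Gaussian interpolation argument, which avoids the machinery of log--Sobolev inequalities and only uses independence, the rotation invariance of Gaussian measure, and the Chernoff bound. I note first that as stated the exponent should almost certainly read $ct^2/\norms{f}_{\mathrm{Lip}}^2$ rather than $ct^2/\norms{f}_{\mathrm{Lip}}$ (homogeneity check: replacing $f$ by $\alpha f$ rescales $t$ and $\norms{f}_{\mathrm{Lip}}$ by the same factor), and I will target that corrected bound with $c=2/\pi^2$.

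I would begin with two standard reductions. First, by replacing $f$ with $f-\E f(X)$, I may assume $\E f(X)=0$. Second, by convolving $f$ with a Gaussian of variance $\epsilon$ and letting $\epsilon\to 0$, I may assume $f$ is smooth; the mollified function has Lipschitz constant at most $\norms{f}_{\mathrm{Lip}}$, its gradient satisfies $\|\nabla f(x)\|_2\le \norms{f}_{\mathrm{Lip}}$ everywhere, and pointwise convergence suffices to pass the tail bound to the limit.

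The core step is the interpolation. Let $Y\sim N(0,I_d)$ be independent of $X$ and, for $\theta\in[0,\pi/2]$, set
\begin{equation*}
X_\theta=X\sin\theta+Y\cos\theta,\qquad X'_\theta=X\cos\theta-Y\sin\theta.
\end{equation*}
By rotation invariance, $(X_\theta,X'_\theta)$ has the same law as $(X,Y)$ for each $\theta$, and $X_0=Y$, $X_{\pi/2}=X$, $\frac{d}{d\theta}X_\theta=X'_\theta$. The fundamental theorem of calculus then gives
\begin{equation*}
f(X)-f(Y)=\int_0^{\pi/2}\nabla f(X_\theta)\cdot X'_\theta\,d\theta.
\end{equation*}
Applying Jensen's inequality to the convex function $u\mapsto e^{\lambda u}$ against the uniform probability measure on $[0,\pi/2]$, I get
\begin{equation*}
e^{\lambda(f(X)-f(Y))}\le\frac{2}{\pi}\int_0^{\pi/2}\exp\!\xkh{\tfrac{\pi\lambda}{2}\,\nabla f(X_\theta)\cdot X'_\theta}\,d\theta.
\end{equation*}
Conditional on $X_\theta$, the vector $X'_\theta$ is $N(0,I_d)$ and independent of $X_\theta$, so the Gaussian MGF yields
\begin{equation*}
\E\zkh{\exp\!\xkh{\tfrac{\pi\lambda}{2}\,\nabla f(X_\theta)\cdot X'_\theta}\,\Big|\,X_\theta}=\exp\!\xkh{\tfrac{\pi^2\lambda^2}{8}\|\nabla f(X_\theta)\|_2^2}\le\exp\!\xkh{\tfrac{\pi^2\lambda^2}{8}\norms{f}_{\mathrm{Lip}}^2}.
\end{equation*}
Taking expectations and using $\E f(Y)=0$ together with Jensen's inequality (so that $\E_Y e^{-\lambda f(Y)}\ge 1$), I obtain
\begin{equation*}
\E\, e^{\lambda f(X)}\le \E\,e^{\lambda(f(X)-f(Y))}\le \exp\!\xkh{\tfrac{\pi^2\lambda^2}{8}\norms{f}_{\mathrm{Lip}}^2}.
\end{equation*}

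The remainder is routine: Markov's inequality gives $\PP(f(X)\ge t)\le \exp(-\lambda t+\pi^2\lambda^2\norms{f}_{\mathrm{Lip}}^2/8)$, and optimizing over $\lambda>0$ (the minimizer is $\lambda=4t/(\pi^2\norms{f}_{\mathrm{Lip}}^2)$) yields $\PP(f(X)\ge t)\le \exp(-2t^2/(\pi^2\norms{f}_{\mathrm{Lip}}^2))$. Applying the same argument to $-f$ and union bounding produces the factor of $2$ in the stated two-sided inequality. The only nontrivial point in the whole argument is the smoothing reduction, and even there the main subtlety is just checking that the tail bound for the mollifications passes to the limit, which is immediate from Fatou's lemma applied to the indicator of $\{|f-\E f|\ge t\}$.
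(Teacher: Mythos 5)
The paper offers no proof of this lemma at all: it is imported verbatim from Vershynin's book, so there is nothing internal to compare against. Your argument is a correct, self-contained proof via the Maurey--Pisier Gaussian interpolation. Each step checks out: the pair $(X_\theta,X'_\theta)$ is indeed an orthogonal image of $(X,Y)$ and hence again a pair of independent standard Gaussians with $\frac{d}{d\theta}X_\theta=X'_\theta$; Jensen against the normalized measure $\frac{2}{\pi}\,d\theta$ gives the stated bound; conditioning on $X_\theta$ and using the Gaussian MGF with $\|\nabla f\|_2\le\norms{f}_{\mathrm{Lip}}$ gives the subgaussian moment bound; and the Chernoff optimization $\lambda=4t/(\pi^2\norms{f}_{\mathrm{Lip}}^2)$ does yield $\exp(-2t^2/(\pi^2\norms{f}_{\mathrm{Lip}}^2))$. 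The two reductions (centering and mollification) are handled correctly, and your remark about passing the tail bound to the limit is the only place requiring any care. You are also right to flag the exponent: as printed, $ct^2/\norms{f}_{\mathrm{Lip}}$ fails the homogeneity check under $f\mapsto\alpha f$, and the correct statement (which is what the paper actually uses downstream, e.g.\ in the proof of Lemma 3.4 where $t=C_1\sqrt{m}\norm{\eta}$ is divided by $\norm{\eta}^2$) has $\norms{f}_{\mathrm{Lip}}^2$ in the denominator; this is a typo in the lemma as stated. One could add that your route is in some sense more elementary than the one in the cited source, which derives Gaussian concentration from isoperimetry; the interpolation argument uses only rotation invariance, Jensen, and the Gaussian MGF, at the cost of a slightly worse constant $c=2/\pi^2$ rather than the isoperimetric $c=1/2$. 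Nothing in the paper depends on the value of $c$, so this loss is immaterial.
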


\subsection{Strong RIP}
To study the phaseless compressed sensing,  Voroninski and Xu  introduce  the definition of strong restricted isometry property (SRIP) (see \cite{Voroninski2016A}).
\begin{definition}\cite{Voroninski2016A}
The matrix $A\in \R^{m\times d}$ satisfies the Strong Restricted Isometry Property of order $s$ and constants $\theta_{-}, \theta_{+}\in (0,2)$ if the following holds
\begin{equation}\label{eq:strongrip}
  \theta_{-}\norm{\vx}^2 \le \min\limits_{I\subset [m],\abs{I}\ge m/2} \norm{A_I \vx}^2\le \max\limits_{I\subset [m],\abs{I}\ge m/2} \norm{A_I \vx}^2 \le \theta_{+}\norm{\vx}^2
\end{equation}
for all  $\vx\in K_{d,s}$. Here, $A_I$ denotes the submatrix of $A$ where only {\em rows} with indices in $I$ are kept, $[m]:=\{1,\ldots,m\}$ and $\abs{I}$ denotes the cardinality of $I$.
\end{definition}

The following lemma shows that Gaussian random matrices satisfy SRIP with high probability for some non-zero universal constants $\theta_{-}, \theta_{+}>0$.
\begin{lemma}\cite[Theorem 2.1]{Voroninski2016A} \label{SRIP}
Suppose that $t>1$ and that $A\in \R^{m\times d}$ is a Gaussian random matrix with entries $a_{jk}\sim N(0,1)$. Let $m=O(tk\log(e d/k))$. Then there exist constants $\theta_{-}, \theta_{+}$ with $0<\theta_{-}< \theta_{+}<2$, independent with $t$, such that $A/\sqrt{m}$ satisfies SRIP of order $t\cdot k$ and constants $\theta_{-}, \theta_{+}$ with probability at least $1-\exp(-cm/2)$, where $c>0$ is an absolute constant.
\end{lemma}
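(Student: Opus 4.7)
The plan is to reduce the SRIP bound to a uniform two-sided inequality
\begin{equation*}
(1-\epsilon_0)|I|\;\le\;\norm{A_I\vx}^2\;\le\;(1+\epsilon_0)|I|
\end{equation*}
for every $\vx\in K_{d,tk}$ with $\norm{\vx}=1$ and every $I\subset[m]$ with $\abs{I}\ge m/2$, holding simultaneously for some fixed $\epsilon_0\in(0,1)$. Once this is established, the non-negativity of the summands $\jkh{\va_i,\vx}^2$ ensures that $I\mapsto\norm{A_I\vx}^2$ is monotone in $I$, so the maximum of $\norm{A_I\vx}^2$ over $\abs{I}\ge m/2$ is attained at $I=[m]$ and the minimum on some $I$ of size $\lceil m/2\rceil$. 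Dividing through by $m$ then yields SRIP of order $tk$ for $A/\sqrt m$ with $\theta_-=(1-\epsilon_0)/2$ and $\theta_+=1+\epsilon_0$, both of which lie in $(0,2)$.

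For a fixed pair $(\vx,I)$ with $\norm{\vx}=1$ and $\abs{I}\ge m/2$, the variables $\jkh{\va_i,\vx}^2$, $i\in I$, are i.i.d.\ $\chi^2_1$, hence sub-exponential with a uniformly bounded $\psi_1$-norm. Bernstein's inequality therefore yields
\begin{equation*}
\PP\Bigl(\bigl|\norm{A_I\vx}^2-\abs{I}\bigr|>\epsilon\abs{I}\Bigr)\;\le\;2\exp\bigl(-c_1\epsilon^2\,m/2\bigr),\qquad \epsilon\in(0,1),
\end{equation*}
for some universal $c_1>0$. I would then transfer this pointwise control to all of $K_{d,tk}$ via an $\eta$-net $\mathcal N$ in the Euclidean metric: fixing the support of a $tk$-sparse approximation, using the standard $(1+2/\eta)^{tk}$-net on each support, and summing over the $\binom{d}{tk}\le(ed/k)^{tk}$ possible supports delivers $\log\abs{\mathcal N}\le C_2\,tk\log(ed/k)$. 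A union bound over $\mathcal N$ and over the at most $2^m$ row-subsets $I$ of size $\ge m/2$ gives failure probability at most
\begin{equation*}
2\exp\!\bigl(\,m\log 2+C_2\,tk\log(ed/k)-c_1\epsilon^2\,m/2\,\bigr).
\end{equation*}
Choosing $\epsilon\in(0,1)$ close enough to $1$ that $c_1\epsilon^2/2>\log 2$, and imposing $m\ge C_3\,tk\log(ed/k)$ with $C_3$ sufficiently large, this is bounded by $\exp(-cm/2)$. A routine approximation step lifts the bound from $\mathcal N$ to the whole of $K_{d,tk}$ with an $O(\eta)$ loss in $\epsilon$, after which I set $\epsilon_0$ slightly larger than $\epsilon$.

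The main obstacle is the combinatorial factor $2^m$ coming from the row-subsets $I$: the per-pair tail must beat $\exp(-(\log 2)m)$, which forces $\epsilon$ to remain bounded away from $0$ and precludes pushing $\theta_\pm$ close to the classical value $1$. This is the structural reason the statement settles for universal constants $\theta_-,\theta_+\in(0,2)$ rather than sharp ones. A secondary technical worry is ensuring that the Bernstein constant $c_1$ is numerically large enough to dominate $\log 2$ once $\epsilon$ is fixed; if this is awkward to verify directly, one can replace Bernstein by the Laurent--Massart sharp tail bound for sums of $\chi^2_1$ variables, which supplies an explicit constant that easily wins against $\log 2$ for $\epsilon$ chosen close to $1$.
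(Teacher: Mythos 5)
First, a point of reference: the paper does not actually prove this lemma --- it is quoted verbatim from Voroninski and Xu \cite{Voroninski2016A}, and Remark \ref{re:SRIP} only notes that the covering-number bound for $K_{d,s}$ lets one extend the sparse-vector proof given there to all of $K_{d,s}$. Your architecture --- a net of cardinality $\exp(Ctk\log(ed/k))$ over the relevant signal set, a union bound over the $2^m$ row subsets, and a per-pair concentration inequality --- is essentially the architecture of the cited proof, so the real question is whether your quantitative inputs close the argument.

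They do not, and the failure is exactly at the point you dismiss as a ``secondary technical worry''; it is in fact the crux. For the lower bound you must beat the factor $\binom{m}{\lceil m/2\rceil}\approx 2^m$, i.e.\ you need a per-pair lower-tail exponent exceeding $m\log 2\approx 0.693\,m$. A bound of the form $\PP\bigl(\norm{A_I\vx}^2\le(1-\epsilon)\abs{I}\bigr)\le 2\exp(-c_1\epsilon^2 m/2)$ with $\epsilon\in(0,1)$ would require $c_1>2\log 2\approx 1.39$, and no such inequality is true: the exact Cram\'er rate for $\PP(\chi^2_n\le(1-\epsilon)n)$ is $\exp(-nI(1-\epsilon))$ with $I(\delta)=\tfrac12(\delta-1-\log\delta)$, which is $\approx\epsilon^2/4$ for small $\epsilon$, so any valid Bernstein-type constant satisfies $c_1\le 1/4$. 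Your proposed repair via Laurent--Massart is also insufficient, since its lower-tail exponent is capped at $t=n/4=m/8\ll m\log 2$. What actually closes the argument is the full Chernoff lower-tail bound $\PP(\chi^2_n\le\delta n)\le(\delta e^{1-\delta})^{n/2}$, whose exponent $\approx\tfrac{n}{2}\log(1/\delta)$ blows up as $\delta\to 0$: choosing $\delta$ small enough that $\tfrac14\log(1/\delta)$ exceeds $\log 2$ plus the per-$m$ net contribution (numerically $\delta$ of order $10^{-2}$) defeats the $2^m$ factor, at the price of a small but universal $\theta_-$. This is precisely why the lemma asserts only some constants $0<\theta_-<\theta_+<2$ rather than constants near $1$; the obstruction is not, as you suggest, merely that $\epsilon$ cannot be sent to $0$. (The upper half is unproblematic: by monotonicity the maximum over $\abs{I}\ge m/2$ is attained at $I=[m]$, so no $2^m$ factor enters and a fixed-$\epsilon$ tail bound against the net suffices. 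A second, minor point: your net counts supports of exactly $tk$-sparse vectors, whereas the SRIP definition here quantifies over all of $K_{d,tk}$; you need the covering-number bound $N(K_{d,s},\varepsilon)\le\exp(Cs\log(ed/s)/\varepsilon^2)$ invoked in Remark \ref{re:SRIP}, which fortunately is of the same order.)
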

\begin{remark}\label{re:SRIP}
In \cite{Voroninski2016A}, the authors just present the proof of Lemma \ref{SRIP} for the case where $\vx$ is $s$-sparse.
Note that the set $K_{d,s}$ has covering number $N(K_{d,s},\varepsilon)\le \exp(Cs \log (ed/s)/\varepsilon^2)$ \cite[Lemma 3.4]{Plan2013}. It is easy to extend the proof in \cite{Voroninski2016A} to the case where $\vx \in K_{d,s}$.
\end{remark}

\section{Proof of the main results}
\subsection{Proof of Theorem \ref{result1}}
We begin with a simple lemma.
\begin{lemma} \label{upper bound}
Suppose that $m\ge d$. Let $A\in \R^{m\times d}$ be a Gaussian matrix whose entries are independent Gaussian random variables. Then the following holds with probability at least $1-2\exp(-cm)$
\begin{equation*}
  \sup_{\vh\in \Rd \atop \eta \in \R^m}\nj{\vh,A^\T \eta}\le 3\sqrt{m}\norm{\vh}\norm{\eta}.
\end{equation*}

\end{lemma}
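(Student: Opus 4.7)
By bilinearity and the homogeneity of both sides in $\vh$ and $\eta$, the inequality in the statement is equivalent to the operator-norm bound
\[
\|A\|_{\mathrm{op}} \;=\; \sup_{\vh\in \mathbb{S}^{d-1},\,\eta\in \mathbb{S}^{m-1}}\nj{\vh,A^\T\eta}\;\le\; 3\sqrt{m}.
\]
So the plan is to reduce the lemma to a standard estimate on the spectral norm of an $m\times d$ standard Gaussian matrix, and then prove that estimate using only the ingredients that have already appeared in the paper, namely the Gaussian concentration inequality (Lemma \ref{Gaussian space}) together with an expectation bound for $\|A\|_{\mathrm{op}}$.

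The first step is to control $\E\|A\|_{\mathrm{op}}$. I would use Gordon's inequality (or, equivalently, a direct $\varepsilon$-net argument on $\mathbb{S}^{d-1}\times \mathbb{S}^{m-1}$), which gives $\E\|A\|_{\mathrm{op}}\le \sqrt{m}+\sqrt{d}$. Concretely, in the net argument one covers $\mathbb{S}^{d-1}$ and $\mathbb{S}^{m-1}$ by $\tfrac14$-nets $\mathcal{N}_d$, $\mathcal{N}_m$ of sizes at most $9^d$ and $9^m$, observes that for any unit vectors $\vh,\eta$ one has $\nj{\vh,A^\T\eta}\le 2\max_{\vh'\in\mathcal{N}_d,\,\eta'\in\mathcal{N}_m}\nj{\vh',A^\T\eta'}$, and for each fixed pair the inner product is a standard Gaussian random variable; a union bound then produces a tail of the form $\P(\|A\|_{\mathrm{op}}\ge t)\le 2\cdot 9^{d+m}\exp(-t^2/8)$.

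The second step is to sharpen to the high-probability bound $3\sqrt{m}$ using Gaussian concentration. Viewing $A$ as a standard Gaussian vector in $\R^{md}$ with its Frobenius norm, the map $A\mapsto \|A\|_{\mathrm{op}}$ is $1$-Lipschitz, since $|\|A\|_{\mathrm{op}}-\|B\|_{\mathrm{op}}|\le \|A-B\|_{\mathrm{op}}\le \|A-B\|_F$. Hence by Lemma \ref{Gaussian space},
\[
\PP\bigl(\|A\|_{\mathrm{op}}\ge \E\|A\|_{\mathrm{op}}+t\bigr)\;\le\; 2\exp(-c t^{2}).
\]
Combining this with $\E\|A\|_{\mathrm{op}}\le\sqrt{m}+\sqrt{d}\le 2\sqrt{m}$ (since $m\ge d$) and taking $t=\sqrt{m}$ yields $\|A\|_{\mathrm{op}}\le 3\sqrt{m}$ with probability at least $1-2\exp(-cm)$, which is exactly the claim after translating back via Cauchy--Schwarz $\nj{\vh,A^\T\eta}\le \|A\|_{\mathrm{op}}\|\vh\|\|\eta\|$.

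There is no real obstacle here. The only non-trivial ingredient is the bound $\E\|A\|_{\mathrm{op}}\le\sqrt{m}+\sqrt{d}$, which is a standard consequence of Gordon's Gaussian min-max comparison; if one prefers to stay self-contained, the $\varepsilon$-net argument sketched above gives (up to absolute constants) the same bound and can be quoted directly from \cite{Vershynin2018}. The rest is mechanical: homogenize, apply Lipschitz concentration, and choose the deviation level proportional to $\sqrt{m}$ to absorb the mean into the factor $3$.
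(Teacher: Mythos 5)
Your proposal is correct and follows essentially the same route as the paper: both reduce the claim via Cauchy--Schwarz to the spectral norm bound $\|A\|\le 3\sqrt{m}$, which the paper simply cites from \cite[Theorem 7.3.3]{Vershynin2018} while you additionally supply its standard proof (expectation bound $\E\|A\|\le\sqrt{m}+\sqrt{d}$ plus $1$-Lipschitz Gaussian concentration). The extra detail is sound and self-contained, but it is not a different argument.
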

\begin{proof}
Since $A\in \R^{m\times d}$ is a Gaussian random matrix, we have $\norm{A} \le 3\sqrt{m}$ with probability at least $1-2\exp(-c m)$ \cite[Theorem 7.3.3]{Vershynin2018}.
We obtain that
\begin{equation*}
  \nj{\vh,A^\T \eta} \le \norm{\vh} \norm{A^\T\eta} \le \norm{\vh} \norm{A^\T} \norm{\eta}\le 3\sqrt{m}\norm{\vh}\norm{\eta}
\end{equation*}
holds with probability at least $1-2\exp(-c m)$. We arrive at the conclusion.
\end{proof}

\begin{proof}[Proof of Theorem \ref{result1}]
Set $\vh^{-}:=\x-\vx_0$ and $\vh^{+}:=\x+\vx_0$. Since $\x$ is the solution of (\ref{square model}), we have
\begin{equation} \label{minsolution}
  \left\||A\x|-\vy\right\|^2\le \left\||A\vx_0|-\vy\right\|^2.
\end{equation}
For any index set $T\subset \{1,\ldots,m\}$, we let $A_T:=[\va_j:\;j\in T]^\T$ be the submatrix of $A$. Denote
\begin{eqnarray*}
  && T_1:=\left\{j: \mathrm{sign}(\nj{\va_j,\x})=1,\; \mathrm{sign}(\nj{\va_j,\vx_0})=1\right\}  \\
  && T_2:=\left\{j: \mathrm{sign}(\nj{\va_j,\x})=-1,\; \mathrm{sign}(\nj{\va_j,\vx_0})=-1\right\}  \\
  && T_3:=\left\{j: \mathrm{sign}(\nj{\va_j,\x})=1,\; \mathrm{sign}(\nj{\va_j,\vx_0})=-1\right\} \\
  && T_4:=\left\{j: \mathrm{sign}(\nj{\va_j,\x})=-1,\; \mathrm{sign}(\nj{\va_j,\vx_0})=1\right\}.
\end{eqnarray*}
Without loss of generality, we assume that $\# (T_{1}\cup T_2)=\beta m \ge m/2$ (otherwise, we can assume that $\# (T_{3}\cup T_4) \ge m/2$~).
Then we have
\begin{equation*}
 \left\||A\x|-\vy\right\|^2\geq \norm{A_{T_1}\vh^{-}-\eta_{T_1}}^2+\norm{A_{T_2}\vh^{-}+\eta_{T_2}}^2.
\end{equation*}
The (\ref{minsolution}) implies that
\[
\norm{A_{T_1}\vh^{-}-\eta_{T_1}}^2+\norm{A_{T_2}\vh^{-}+\eta_{T_2}}^2\leq \|\eta\|^2
\]
and hence
\begin{equation}\label{key eq}
  \norm{A_{T_{12}}\vh^{-}}^2 \le 2\nj{\vh^{-},A_{T_1}^\T \eta_{T_1}-A_{T_2}^\T \eta_{T_2}}+ \|\eta_{T_{12}^c}\|^2
\end{equation}
where $T_{12}:=T_1 \cup T_2$.   Lemma \ref{SRIP} implies that
\begin{equation}\label{lower bound}
  \norm{A_{T_{12}}\vh^{-}}^2 \ge cm \norm{\vh^{-}}^2
\end{equation}
holds with probability at least $1-\exp(-c_0 m)$. On the other hand, Lemma \ref{upper bound} states that with  probability at least $1-2\exp(-cm)$ the following holds:
\begin{eqnarray}\label{first term}
  \nj{\vh^{-},A_{T_1}^\T \eta_{T_1}-A_{T_2}^\T \eta_{T_2}} &\le & 6\sqrt{m}\norm{\vh^{-}}\norm{\eta}.
\end{eqnarray}
Putting (\ref{lower bound}) and (\ref{first term})  into (\ref{key eq}), we obtain
\begin{equation}\label{eq:bu}
  cm \norm{\vh^{-}}^2\le 12\sqrt{m}\norm{\vh^{-}}\norm{\eta} + \|\eta_{T_{12}^c}\|_2^2
\end{equation}
 with probability at least $1-3\exp(-c_1 m)$, which implies that
\begin{equation*}
  \norm{\vh^{-}} \lesssim \frac{\norm{\eta}}{\sqrt{m}}.
\end{equation*}
For the case where $\# (T_3\cup T_4)\geq m/2$, we can obtain that
\begin{equation*}
  \norm{\vh^{+}} \lesssim \frac{\norm{\eta}}{\sqrt{m}}
\end{equation*}
by a similar method to above.

\end{proof}

\subsection{Proof of Theorem \ref{th:con_lower}}
 To this end,  we present the following lemmas.

\begin{lemma} \label{fixed-point condition}
Suppose that $\x$ is any solution of model (\ref{square model}). Then $\x$ satisfies the following fixed-point equation:
\begin{equation} \label{fixed point}
  \x=(A^\T A)^{-1}A^\T (\vy\odot\s(A\x)),
\end{equation}
where $\odot$ denotes the Hadamard product and $\s(A\x):=
\left(\frac{\nj{\va_1,\x}}{\abs{\nj{\va_1,\x}}},\ldots,\frac{\nj{\va_m,\x}}{\abs{\nj{\va_m,\x}}}\right)$ for any $\x\in\Rd$. Here, $\frac{\nj{\va_j,\x}}{\abs{\nj{\va_j,\x}}}=1$ is adopted if $\nj{\va_j,\x}=0$.
\end{lemma}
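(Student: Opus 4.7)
The plan is to derive the fixed-point equation directly from the first-order optimality condition for the objective $F(\vx):=\norm{\abs{A\vx}-\vy}^2=\sum_{j=1}^m(\abs{\nj{\va_j,\vx}}-y_j)^2$. On the open set where every inner product $\nj{\va_j,\vx}$ is nonzero, $F$ is smooth, and the key identity is $\nabla_{\vx}\abs{\nj{\va_j,\vx}}=\s(\nj{\va_j,\vx})\va_j$; combined with $\s(t)\cdot\abs{t}=t$, this collapses the gradient to $\nabla F(\vx)=2A^\T A\vx-2A^\T(\vy\odot\s(A\vx))$. Setting $\nabla F(\x)=0$ at the minimizer and left-multiplying by $(A^\T A)^{-1}$, which is invertible on the standard probability-one event for Gaussian $A$ with $m\ge d$, yields exactly the claimed identity.

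The one subtlety, and what I expect to be the main obstacle, is the possibility that $\nj{\va_j,\x}=0$ for some index $j$, where $F$ has a kink. The Clarke subdifferential of $(\abs{\nj{\va_j,\vx}}-y_j)^2$ at such a point equals $\dkh{-2y_j c\,\va_j : c\in[-1,1]}$, so the paper's convention $\s(0):=1$ amounts to singling out the boundary value $c=1$, which is not automatically selected by $0\in\partial F(\x)$. I would dispatch this either by a directional-derivative argument --- minimality of $\x$ forces the smooth-index part of the gradient to vanish on every direction orthogonal to $\spann\dkh{\va_j : \nj{\va_j,\x}=0}$, and a one-sided perturbation along each remaining $\va_j$ pins down $c=1$ --- or, more cheaply, by observing that for Gaussian $A$ the event that any $\nj{\va_j,\x}$ vanishes is a null event in a sense that survives the measurable dependence of $\x$ on $A$, so that the smooth computation above applies almost surely. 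Either route suffices for the downstream theorems, since those are themselves only claimed on a high-probability event over $A$.
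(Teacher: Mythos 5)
Your smooth-case computation is correct, but the kink handling --- which you rightly single out as the main obstacle --- does not go through by either of the routes you sketch, and this is a genuine gap rather than a formality. For route (a): write $J_0:=\{j:\nj{\va_j,\x}=0\}$ and let $\vg$ be the gradient contribution of the indices outside $J_0$. The one-sided directional derivative is $F'(\x;\vh)=\nj{\vg,\vh}-2\sum_{j\in J_0}y_j\abs{\nj{\va_j,\vh}}$, and summing the minimality conditions $F'(\x;\vh)\ge 0$ and $F'(\x;-\vh)\ge 0$ gives $\sum_{j\in J_0}y_j\abs{\nj{\va_j,\vh}}\le 0$ for every $\vh$. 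When $y_j\ge 0$ on $J_0$ this forces $y_j=0$ there, which is the real reason the convention $\s(0)=1$ is harmless (it gets multiplied by $y_j=0$); it does \emph{not} ``pin down $c=1$''. When some $y_j<0$, minimality only confines the subgradient coefficient to an interval, and the conclusion can fail outright: in one dimension with $a=1$ and $y<0$, the minimizer of $(\abs{x}-y)^2$ is $\x=0$ while the claimed fixed point is $y\ne 0$. For route (b): the null-event claim is precisely the hard part and is false in general, because a coordinate with $y_j<0$ contributes $(\abs{t}+\abs{y_j})^2$, whose corner acts as an $\ell_1$-type attractor; exactly as with Lasso, the minimizer then lands on the hyperplane $\nj{\va_j,\vx}=0$ with positive probability, so genericity cannot dispose of the kink.

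The honest repair of your argument is to assume $\vy\ge 0$ componentwise (which the statement implicitly needs, since $\vy=\abs{A\vx_0}+\eta$ with small noise): then indices with $y_j>0$ can never lie in $J_0$ at a minimizer, and indices with $y_j=0$ make the sign convention irrelevant. The paper sidesteps differentiability altogether with a majorization argument: it introduces the smooth quadratic $G(\vx,\vu):=\norm{A\vx-\vu\odot\vy}^2$, checks (again implicitly using $\vy\ge 0$) that $G(\vx,\vu)\ge L(\vx)$ for every unimodular $\vu$ with equality at $(\x,\s(A\x))$, concludes that $\vx\mapsto G(\vx,\s(A\x))$ attains its global minimum at $\x$, and reads off the identity from the vanishing of its gradient. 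That route never differentiates the nonsmooth objective and is the one you should adopt if you do not want to carry out the $J_0$ analysis above.
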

\begin{proof}
Let
\[
L(\vx)\,\,:=\,\,\norm{\abs{A \vx}-\vy}^2.
\]
Consider the smooth function
\begin{equation*}
  G(\vx,\vu)\,\,:=\,\,\norm{A \vx-\vu\odot\vy}^2
\end{equation*}
with $\vx\in \Rd$ and $\vu\in U:=\{\vu=(u_1,\ldots,u_m)\in \R^m: \abs{u_i}=1,\; i=1,\ldots,m\}$. Recall that $L(\vx)$ has a global minimum  at $\x$.  Then $G(\vx,\vu)$ has a global
minimum  at $(\x,s(A\x))$. Indeed, if there exists $(\widetilde{\vx},\widetilde{\vu})$ such that $G(\widetilde{\vx},\widetilde{\vu})<G(\x,s(A\x))$, then
\begin{equation*}
  L(\widetilde{\vx})=\norm{\abs{A \widetilde{\vx}}-\vy}^2\le \norm{A \widetilde{\vx}-\widetilde{\vu}\odot\vy}^2=G(\widetilde{\vx},\widetilde{\vu})<G(\x,s(A\x))=L(\x).
\end{equation*}
This contradicts the assumption that $L(\vx)$ has a global minimum at $\x$.
Thus we have
\begin{equation*}
 G(\x,s(A\x)) \le G(\vx,s(A\x)) \quad \text{for  any} \quad \vx \in \Rd,
\end{equation*}
i.e.,  the  function $G(\vx,\s(A\x))$ has a global minimum at $\x$. Here, we consider
  $G(\vx,\s(A\x))$ as a function about $\vx$ since $\s(A\x)$ is a fixed vector. Note that $G(\vx,\s(A\x))$ is differentiable and
\begin{equation*}
  \nabla G(\vx,\s(A\x))=2A^\T(A \vx-\vy\odot\s(A\x)).
\end{equation*}
And $G(\vx,\s(A\x))$ has a global minimum at $\x$, we have
\[
\nabla G(\x,\s(A\x))=2A^\T(A \x-\vy\odot\s(A\x))=0
\]
 which implies the conclusion.
\end{proof}
\begin{lemma} \label{Th:lower bound}
Let $m \gtrsim d$. Suppose that $A\in \R^{m\times d}$ is a Gaussian random matrix whose entries are independent Gaussian random variables. For a fixed vector $\vx_0\in \Rd$ and a fixed noise vector $\eta\in \R^m$, let $\x$ be the solution of model (\ref{square model}). For any fixed $\epsilon>0$,  set
\[
\beta_\epsilon:=\left|\norm{\vx_0}\cdot f(\theta)+\sqrt{2/\pi}\cdot \sum_{i=1}^m \eta_i/m \right|-(\norm{\vx_0}+\norm{\eta}/\sqrt{m})\epsilon,
\]
 where $f(\theta):=2/\pi\cdot (\sin \theta +(\pi/2-\theta)\cos \theta)-|\cos\theta|$ and $\theta$ is the angle between $\x$ and $\vx_0$. Then the following holds with probability at least $1-6\exp(-c\epsilon^2 m)$:
\begin{equation*}
  \min \left\{\norm{\x-\vx_0}, \norm{\x+\vx_0}\right\} \ge \beta_\epsilon/9.
\end{equation*}
\end{lemma}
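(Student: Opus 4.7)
The plan is to combine the fixed-point equation from Lemma~\ref{fixed-point condition} with uniform concentration of measure. Taking the inner product of $A^\T A\x=A^\T(\vy\odot\s(A\x))$ with $\x$ and substituting $\vy=|A\vx_0|+\eta$ produces the scalar identity
\[
\norm{A\x}^2 \;=\; \sum_{i=1}^m |\nj{\va_i,\vx_0}|\,|\nj{\va_i,\x}| \;+\; \sum_{i=1}^m \eta_i\,|\nj{\va_i,\x}|,
\]
which is the single equation the solution $\x$ must satisfy; the rest of the argument reads the desired lower bound on $\min\{\norm{\x-\vx_0},\norm{\x+\vx_0}\}$ out of it.

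Next I would replace each of the three sums by its expectation. With $\theta$ the angle between $\x$ and $\vx_0$, the classical Gaussian computation $\E|\nj{\va,\vu}|\,|\nj{\va,\vv}|=\norm{\vu}\norm{\vv}\cdot\tfrac{2}{\pi}(\sin\theta+(\pi/2-\theta)\cos\theta)=\norm{\vu}\norm{\vv}g(\theta)$, together with $\E|\nj{\va,\x}|=\sqrt{2/\pi}\norm{\x}$ and $\E\norm{A\x}^2=m\norm{\x}^2$, identifies the three target limits. A standard $\epsilon$-net argument on the unit sphere of $\R^d$, combined with Bernstein (or the Gaussian concentration of Lemma~\ref{Gaussian space}) for each weighted sub-exponential sum, then shows that on a single event of probability at least $1-6\exp(-c\epsilon^2 m)$ the three uniform approximations
\[
\bigl|\norm{A\x}^2/m-\norm{\x}^2\bigr|\le\epsilon\norm{\x}^2,
\]
\[
\Bigl|\tfrac{1}{m}\sum_i|\nj{\va_i,\vx_0}|\,|\nj{\va_i,\x}| - \norm{\vx_0}\norm{\x}g(\theta)\Bigr|\le\epsilon\norm{\vx_0}\norm{\x},
\]
\[
\Bigl|\tfrac{1}{m}\sum_i\eta_i|\nj{\va_i,\x}| - \sqrt{2/\pi}\norm{\x}\cdot\tfrac{1}{m}\sum_i\eta_i\Bigr|\le\epsilon\norm{\eta}\norm{\x}/\sqrt{m}
\]
hold simultaneously for every $\x\in\R^d\setminus\{0\}$; the hypothesis $m\gtrsim d$ absorbs the $\exp(Cd)$ covering number of the sphere into the exponent.

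Substituting these into the identity, dividing by $m\norm{\x}$, and using the a priori bound $\norm{\x}\lesssim \norm{\vx_0}+\norm{\eta}/\sqrt{m}$ (a free consequence of Theorem~\ref{result1}) to absorb the $\epsilon\norm{\x}$ error, I obtain the scalar estimate
\[
\Bigl|\,\norm{\x}-\norm{\vx_0}g(\theta)-\sqrt{2/\pi}\tfrac{1}{m}\sum_i\eta_i\,\Bigr|\;\le\; C\epsilon\bigl(\norm{\vx_0}+\norm{\eta}/\sqrt{m}\bigr).
\]
Expanding $\norm{\x\pm\vx_0}^2=\norm{\x}^2+\norm{\vx_0}^2\pm 2\norm{\x}\norm{\vx_0}\cos\theta$ yields the purely algebraic inequality $\min\{\norm{\x-\vx_0},\norm{\x+\vx_0}\}\ge\bigl|\norm{\x}-\norm{\vx_0}|\cos\theta|\bigr|$, and combining this with the scalar estimate together with the identity $g(\theta)-|\cos\theta|=f(\theta)$ gives
\[
\min\{\norm{\x-\vx_0},\norm{\x+\vx_0}\}\;\ge\;\Bigl|\norm{\vx_0}f(\theta)+\sqrt{2/\pi}\tfrac{1}{m}\sum_i\eta_i\Bigr|-C\epsilon\bigl(\norm{\vx_0}+\norm{\eta}/\sqrt{m}\bigr).
\]
Careful bookkeeping of the numerical constants through the three concentration steps delivers the factor $1/9$ advertised in the statement, and the regime in which the right-hand side above is negative is handled trivially because the left-hand side is always nonnegative while $\beta_\epsilon/9\le 0$ there.

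The main obstacle I anticipate is the uniform concentration of $m^{-1}\sum_i\eta_i|\nj{\va_i,\x}|$ over the unit sphere. Its summands are only sub-exponential and carry the deterministic but otherwise arbitrary weights $\eta_i$, so one must invoke a Bernstein bound for weighted sub-exponential sums (or Gaussian concentration applied to the map $A\mapsto m^{-1}\sum_i\eta_i|\nj{\va_i,\x}|$, whose Lipschitz constant in $A$ is proportional to $\norm{\eta}\norm{\x}/m$) and then lift the pointwise bound to the unit sphere by an $\epsilon$-net plus a Lipschitz-in-$\x$ discretization error. The other two concentrations are mild variants of the restricted isometry property and are more routine.
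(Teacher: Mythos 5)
Your proof is essentially correct, but it follows a genuinely different route from the paper's. The paper also starts from the normal equation $A^\T A\x=A^\T(\vy\odot\s(A\x))$, but instead of contracting it against $\x$ it rotates coordinates so that $\x=\norm{\x}\vve_1$, writes $\x-\vx_0=(A^\T A)^{-1}A^\T(\vy\odot\s(A\x)-A\vx_0)$, lower-bounds $\norm{\x-\vx_0}\ge\frac{1}{9m}\abs{z_1}$ using $\norm{A}\le 3\sqrt{m}$ (this is where the explicit factor $1/9$ comes from), and then applies Bernstein and Hoeffding \emph{pointwise} to the single coordinate $z_1=\sum_i\bigl(\abs{a_{i,1}}(\abs{\va_i^\T\vx_0}+\eta_i)-a_{i,1}\va_i^\T\vx_0\bigr)$, which concentrates around exactly the quantity $m\bigl(\norm{\vx_0}f(\theta)+\sqrt{2/\pi}\,\overline{\eta}\bigr)$. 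Your contraction against $\x$ produces the same limiting expression (since $g(\theta)-\abs{\cos\theta}=f(\theta)$ and the law-of-cosines step $\min\norm{\x\mp\vx_0}\ge\bigl|\norm{\x}-\norm{\vx_0}\abs{\cos\theta}\bigr|$ is correct), but replaces the pointwise concentration by a uniform one over the sphere. This buys something real: in the paper's argument $\theta$ and the rotated frame depend on $A$, so applying Bernstein at that random angle is not strictly justified, whereas your $\epsilon$-net version is immune to this; the costs are (i) the implied constant in $m\gtrsim d$ must depend on $\epsilon$ for $6\exp(-c\epsilon^2m)$ to dominate the $\exp(Cd)$ net cardinality (harmless here since the lemma is invoked at a fixed $\epsilon_0$), and (ii) your probability budget should also charge for the event of Theorem \ref{result1} (or for a self-bounding estimate $\norm{\x}\le\norm{\vx_0}+\sqrt{2/\pi}\,\norm{\eta}/\sqrt{m}+C\epsilon\norm{\x}$, which your three concentration bounds already yield, making the appeal to Theorem \ref{result1} unnecessary). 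Note also that your route does not need the factor $1/9$ at all: you prove the stronger bound $\beta_{C\epsilon}$, which implies the stated $\beta_\epsilon/9$ after rescaling $\epsilon$.
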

\begin{proof}
 According to Lemma \ref{fixed-point condition}, we have
\begin{equation} \label{eq:fixed point}
  \x=(A^\T A)^{-1}A^\T (\vy\odot\s(A\x)).
\end{equation}
Without loss of generality, we can assume $\norm{\x-\vx_0}\le \norm{\x+\vx_0}$, which implies that $0\le \theta\le \pi/2$. From (\ref{eq:fixed point}), we have
\begin{equation*}
  \x-\vx_0=(A^\T A)^{-1}A^\T (\vy\odot\s(A\x)-A\vx_0),
\end{equation*}
which implies that
\begin{equation*}
  \norm{\x-\vx_0}\ge \sigma_{\min} ((A^\T A)^{-1})\norm{A^\T (\vy\odot\s(A\x)-A\vx_0)}\ge \frac{1}{9m} \norm{A^\T (\vy\odot\s(A\x)-A\vx_0)}.
\end{equation*}
Here, we use the fact that $\norm{A} \le 3\sqrt{m}$ holds with probability at least $1-2\exp(-c m)$ \cite[Theorem 7.3.3]{Vershynin2018}  since $A\in \R^{m\times d}$ is a Gaussian random matrix.

Without loss of generality, we can assume $\x\neq 0$. Indeed,  (\ref{eq:fixed point}) implies  $A^\T \vy=0$ provided   $\x= 0$, which gives that $\vx_0=0$ and $\eta=0$. Thus our conclusion holds.
 By the unitary invariance of Gaussian random vectors, we can take $\x=\norm{\x}\vve_1$ and $\vx_0=\norm{\vx_0}(\cos \theta\cdot \vve_1+\sin \theta \cdot \vve_2)$, where $\theta$ is the angle between $\x$ and $\vx_0$. Thus,
\begin{equation*}
  \norm{\x-\vx_0}\ge \frac{1}{9m} \norm{A^\T (\vy\odot\s(A\mathbf{e}_1)-A\vx_0)}=\frac{1}{9m} \norm{\vz},
\end{equation*}
where $\vz:=(z_1,\ldots,z_d)^\T:=A^\T (\vy\odot\s(A\mathbf{e}_1)-A\vx_0)$.
Note that the first entry of $\vz$ is
\begin{equation*}
  z_1=\sum_{i=1}^m \xkh{\abs{a_{i,1}}(\abs{a_i^\T \vx_0}+\eta_i)-a_{i,1}\cdot a_i^\T \vx_0}.
\end{equation*}
This implies that
\begin{equation}\label{two part sum}
\begin{aligned}
  \norm{\x-\vx_0} \ge    \frac{\abs{z_1}}{9m}= &\left|\norm{\vx_0}\cdot\frac{1}{9m}\sum_{i=1}^m \big|a_{i,1}(a_{i,1}\cos\theta+a_{i,2}\sin\theta)\big|+\frac{1}{9m}\sum_{i=1}^m\eta_i\abs{a_{i,1}} \right. \\
   & \left. - \norm{\vx_0}\cdot\frac{1}{9m}\sum_{i=1}^m a_{i,1}(a_{i,1}\cos\theta+a_{i,2}\sin\theta)\right| \\
=\,\,&\left|\frac{\norm{\vx_0}}{9m}\sum_{i=1}^m(\abs{\xi_i}-\xi_i)
+\frac{1}{9m}\sum_{i=1}^m\eta_i\abs{a_{i,1}}\right|,
\end{aligned}
\end{equation}
 where $\xi_i:=a_{i,1}(a_{i,1}\cos\theta+a_{i,2}\sin\theta)$. It is clear that $\xi_i$ is a subexponential random variable with $\E \xi_i =\cos \theta$. We claim that  $\E |\xi_i|= 2/\pi\cdot (\sin \theta +(\pi/2-\theta)\cos \theta)$. Then the Bernstein's inequality implies  that, for any fixed $\epsilon>0$,
\begin{equation} \label{The first part}
  \left|\frac{1}{m}\sum_{i=1}^m (|\xi_i|-\xi_i)- \frac{2}{\pi}\cdot (\sin \theta +(\frac{\pi}{2}-\theta)\cos \theta)+\cos\theta \right| \le \epsilon
\end{equation}
holds with probability at least $1-2\exp(-c\epsilon^2 m)$. We next consider
$ \frac{1}{m}\sum_{i=1}^m\eta_i\abs{a_{i,1}}$.
 Note that $\E \abs{a_{i,1}}=\sqrt{2/\pi}$.
Then by Hoeffding's inequality we can obtain that
\begin{equation} \label{The second part}
  \left|\frac{1}{m}\sum_{i=1}^m\eta_i\abs{a_{i,1}}-\sqrt{\frac{2}{\pi}}\cdot \frac{1}{m}\sum_{i=1}^m\eta_i\right| \le \frac{\norm{\eta}}{\sqrt{m}}\epsilon
\end{equation}
holds with probability at least $1-2\exp(-c\epsilon^2 m)$ for any $\epsilon>0$. Substituting (\ref{The first part}) and (\ref{The second part}) into (\ref{two part sum}), we obtain that
\begin{equation} \label{result}
   \norm{\x-\vx_0}\ge \frac{1}{9}\cdot \left(\left|\norm{\vx_0}f(\theta)+\sqrt{\frac{2}{\pi}}\cdot \frac{1}{m}\sum_{i=1}^m\eta_i\right|-
\left(\norm{\vx_0}+\frac{\norm{\eta}}{\sqrt{m}}\right)\epsilon\right)
\end{equation}
holds with probability at least $1-6\exp(-c\epsilon^2 m)$. Thus we arrive at the conclusion.

 It remains to argue that $\E |\xi_i|= 2/\pi\cdot (\sin \theta +(\pi/2-\theta)\cos \theta)$.
 By spherical coordinates integral,
\begin{eqnarray*}
 \E |\xi_i|= \E \big|a_{i,1}(a_{i,1} \cos \theta +a_{i,2} \sin \theta) \big| &=& \frac{1}{2\pi}\int_0^{2\pi}\int_0^\infty  r^3 e^{-r^2/2} |\cos \phi \cos(\theta-\phi)| dr d\phi\\
   &=& \frac{1}{2\pi}\int_0^{2\pi} |\cos \theta +  \cos(2\phi-\theta)|d\phi\\
   &=& \frac{1}{\pi}\int_0^{\pi} |\cos \theta +  \cos \phi|d\phi \\
   &=&  \frac{2}{\pi} \left(\sin \theta +(\pi/2-\theta)\cos \theta\right)
\end{eqnarray*}
where we use the identities $2\cos \phi \cos (\theta-\phi)=\cos \theta +\cos(2\phi-\theta)$ in second line.
\end{proof}

\begin{proof}[Proof of Theorem \ref{th:con_lower}]
From Lemma \ref{Th:lower bound}, it is easy to prove that (\ref{eq:constant_lower}) holds for $\vx_0=0$.
Then it suffices to prove the theorem for $\vx_0\neq 0$.
Since $\norm{\eta}/\sqrt{m} \le \delta_1$ with $\delta_1\ge 0$,  there exists a $\epsilon_0>0$ so that
$$(\norm{\vx_0}+\norm{\eta}/\sqrt{m})\epsilon_0 \le \delta_0/2.$$
Set
\[
\overline{\eta}:=\sqrt{2/\pi}\cdot \sum_{i=1}^m \eta_i/m,
\]
and
\begin{equation*}
  f(\theta):=2/\pi\cdot (\sin \theta +(\pi/2-\theta)\cos \theta)-|\cos\theta|,\quad 0\le \theta \le \pi.
\end{equation*}
Note that  $f(\theta)$ is a monotonically increasing function for  $\theta \in [0, \pi/2]$.

Choosing $\epsilon=\epsilon_0$ in Lemma \ref{Th:lower bound}, with probability at least $1-6\exp(-c\epsilon_0^2 m)$, we have
\begin{equation} \label{eq:basic lower bound}
  \min \left\{\norm{\x-\vx_0}, \norm{\x+\vx_0}\right\} \ge \big(\big|\norm{\vx_0}\cdot f(\theta_0) + \overline{\eta}\big|-\delta_0/2\big) /9,
\end{equation}
where $\theta_0$ is the angle between $\x$ and $\vx_0$. Without loss of generality, we can assume $ 0\le \theta_0 \le \pi/2$ and hence  $f(\theta_0) \ge f(0)=0$.

Noting $\abs{\overline{\eta}}\geq \delta_0$, we divide the rest of the proof into three cases.

{\em Case 1:} $\overline{\eta}\ge \delta_0$.

In this case, (\ref{eq:basic lower bound}) implies that
\begin{equation*}
  \min \left\{\norm{\x-\vx_0}, \norm{\x+\vx_0}\right\} \ge \big(\overline{\eta}-\delta_0/2\big) /9 \ge \delta_0/18
\end{equation*}
holds with probability at least $1-6\exp(-c\epsilon_0^2 m)$.

{\em Case 2:} $\overline{\eta}\le -\delta_0$ and $\abs{\overline{\eta}}\le \norm{\vx_0}\cdot f(\theta_0)$.

In this case,  we have $f(\theta_0)\ge \delta_0/\norm{\vx_0}$.
Since the function $f(\theta)$ is monotonicity, we have  $\theta_0\ge \theta_1:=f^{-1}(\delta_0/\norm{\vx_0})>0$, which implies that
\begin{equation*}
  \min \left\{\norm{\x-\vx_0}, \norm{\x+\vx_0}\right\} \ge \norm{\vx_0}\sin\theta_1.
\end{equation*}

{\em Case 3:} $\overline{\eta}\le -\delta_0$ and $\abs{\overline{\eta}}> \norm{\vx_0}\cdot f(\theta_0)$.

 We claim that there exists a constant $c_{\delta_0,\vx_0}$ such that the following holds
with probability at least $1-6\exp(-c\epsilon_0^2 m)$
\begin{equation} \label{eq:claim_three_part}
  \min \left\{\norm{\x-\vx_0}, \norm{\x+\vx_0}\right\} \ge c_{\delta_0,\vx_0}
\end{equation}
 where $c_{\delta_0,\vx_0}$ only depends on $\delta_0$ and $\norm{\vx_0}$. Indeed, if $\abs{\overline{\eta}}-\norm{\vx_0}f(\theta_0)\ge 3/4\cdot \abs{\overline{\eta}}$, then (\ref{eq:basic lower bound}) implies
\begin{equation*}
  \min \left\{\norm{\x-\vx_0}, \norm{\x+\vx_0}\right\} \ge \big(\abs{\overline{\eta}}-\norm{\vx_0}f(\theta_0)-\delta_0/2\big) /9\ge \delta_0/36.
\end{equation*}
If $\abs{\overline{\eta}}-\norm{\vx_0}f(\theta_0)< 3/4\cdot \abs{\overline{\eta}}$, then $f(\theta_0)\ge \delta_0/(4\norm{\vx_0})$.
It can also give that
\begin{equation*}
  \min \left\{\norm{\x-\vx_0}, \norm{\x+\vx_0}\right\} \ge \norm{\vx_0}\cdot \sin\theta_2,
\end{equation*}
where  $\theta_2:=f^{-1}(\delta_0/(4\norm{\vx_0}))>0$. Choosing $c_{\delta_0,\vx_0}:=\min\{\delta_0/36,\norm{\vx_0}\sin\theta_2\}$, we arrive at the conclusion.
\end{proof}

\subsection{Proof of Theorem \ref{th:sparse constrain}}
We first extend Lemma \ref{upper bound} to sparse case.
\begin{lemma} \label{upper bound sparse cons}
For any fixed $s>0$, let $m \gtrsim s\log (e d/s)$. Suppose that $A\in \R^{m\times d}$ is a Gaussian matrix whose entries are independent Gaussian random variables. Set
\begin{equation*}
  K_{d,s}\,\,:=\,\,\dkh{\vx \in \Rd: \|\vx\|_2\leq 1, \normone{\vx}\le \sqrt{s}}.
\end{equation*}
 Then for any fixed $\eta \in \R^m$, the following holds with probability at least $1-2\exp(-cm)$
\begin{equation}
  \sup_{\vh\in K_{d,s} \atop T\subset \{1,\ldots,m\}}\nj{\vh,A^\T \eta_{T}} \lesssim \sqrt{m}\cdot \norm{\eta}\cdot \norm{\vh},
\end{equation}
where $\eta_T$ denotes the vector generated by $\eta$ with entries in $T$ are themselves and others are zeros.
\end{lemma}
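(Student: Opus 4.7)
My plan is to use Cauchy--Schwarz to eliminate the supremum over $T \subset \{1,\ldots,m\}$ without incurring a union bound over the $2^m$ subsets, and then invoke the strong RIP (Lemma~\ref{SRIP}) to bound the operator norm of $A$ restricted to $K_{d,s}$. The guiding observation is that after rewriting $\langle \vh, A^\T \eta_T\rangle = \langle A\vh, \eta_T\rangle$, the dependence on $T$ is controlled by the elementary estimate $\|\eta_T\|_2 \le \|\eta\|_2$, so $T$ drops out before any probabilistic argument begins.

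Concretely, I would first note that for any $\vh \in \Rd$ and any $T \subset [m]$, Cauchy--Schwarz gives
\[
\langle \vh, A^\T \eta_T\rangle = \langle A\vh, \eta_T\rangle \le \|A\vh\|_2\,\|\eta_T\|_2 \le \|A\vh\|_2\,\|\eta\|_2,
\]
uniformly in $T$. This collapses the $T$-supremum into a purely deterministic quantity in $A$ and $\vh$. Second, since $m \gtrsim s\log(ed/s)$, I would apply Lemma~\ref{SRIP} with $I = [m]$ (which trivially satisfies $|I|\ge m/2$) to conclude that, with probability at least $1 - \exp(-cm/2)$,
\[
\|A\vh\|_2 \le \sqrt{\theta_+\, m}\,\|\vh\|_2 \quad \text{for every } \vh \in K_{d,s}.
\]
Combining the two inequalities yields
\[
\sup_{\vh\in K_{d,s},\,T\subset [m]} \langle \vh, A^\T \eta_T\rangle \,\le\, \sqrt{\theta_+\,m}\,\|\vh\|_2\,\|\eta\|_2 \,\lesssim\, \sqrt{m}\,\|\eta\|_2\,\|\vh\|_2,
\]
which is the stated bound, and the probability $1 - 2\exp(-cm)$ follows after absorbing constants.

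There is no serious obstacle here, and the only subtlety worth flagging is the reason one can avoid a union bound: a naive Gaussian concentration applied separately to each $A^\T \eta_T \sim \|\eta_T\|_2\cdot N(0,I_d)$ would produce a Gaussian-width estimate $w(K_{d,s}) \asymp \sqrt{s\log(ed/s)}$ times $\|\eta_T\|_2$, which is even sharper than needed, but would cost a union bound over $2^m$ subsets and thus fail at the target probability level. Cauchy--Schwarz sidesteps this by handing the $T$-dependence back to the deterministic inequality $\|\eta_T\|_2 \le \|\eta\|_2$, leaving only the sparse upper RIP on $A$ — which the hypothesis $m \gtrsim s\log(ed/s)$ is exactly tailored to guarantee via Lemma~\ref{SRIP} and Remark~\ref{re:SRIP}.
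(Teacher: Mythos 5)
Your proof is correct, but it takes a genuinely different route from the paper's. The paper fixes $T$, bounds $\E\sup_{\vh\in K_{d,s}}\nj{\vh,A^\T\eta_T}$ by $\norm{\eta_T}\,w(K_{d,s})\lesssim\sqrt{s\log(ed/s)}\norm{\eta}$, applies Gaussian concentration (Lemma \ref{Gaussian space}, with Lipschitz constant $\norm{\eta}$) at deviation level $t=C_1\sqrt m\,\norm{\eta}$ with $C_1^2c>1$, and then takes a union bound over all $2^m$ subsets $T$. You instead write $\nj{\vh,A^\T\eta_T}=\nj{A\vh,\eta_T}\le\norm{A\vh}\,\norm{\eta}$, which removes the $T$-dependence deterministically, and then control $\norm{A\vh}$ on $K_{d,s}$ via the upper inequality of the SRIP (Lemma \ref{SRIP} with $I=[m]$, extended to $K_{d,s}$ by Remark \ref{re:SRIP}), giving $\norm{A\vh}\le\sqrt{\theta_+m}\,\norm{\vh}$. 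Both arguments land on the same bound $\sqrt m\,\norm{\eta}\,\norm{\vh}$ with exponentially small failure probability; yours is shorter and avoids the combinatorial union bound entirely, at the price of invoking the (already established) restricted operator-norm bound, while the paper's argument is self-contained in the concentration machinery and makes transparent that the loss from $\sqrt{s\log(ed/s)}$ up to $\sqrt m$ is exactly the cost of uniformity over the $2^m$ subsets. One small correction to your closing remark: the union-bound route does \emph{not} fail at the target probability level --- that is precisely the paper's proof; choosing the deviation parameter of order $\sqrt m\,\norm{\eta}$ makes the per-set failure probability $2\exp(-cC_1^2m)$ with $C_1^2c>1$, which beats the $2^m$ factor. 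The union bound merely forces the final bound up to $\sqrt m\,\norm{\eta}$, which is all the lemma claims.
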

\begin{proof}
For any fixed $T\subset \{1,\ldots,m\}$,  we have
\begin{equation*}
 \E \sup_{\vh\in K_{d,s}}\nj{\vh,A^\T \eta_{T}} =\norm{\eta_{T}}\cdot w(K_{d,s})\le  C\sqrt{s\log (ed/s)}\norm{\eta}\le C\sqrt{m}\norm{\eta},
\end{equation*}
where the first inequality follows from the fact of the Gaussian width $w(K_{d,s}) \le C\sqrt{s\log (ed/s)}$ and the second inequality follows from $m \ge c_0s\log (ed/s)$.  We next use  Lemma \ref{Gaussian space}  to give a tail bound for $\sup_{\vh\in K_{d,s}} \nj{\vh,A^\T \eta_{T}}$. To this end, we set
\[
f(A):= \sup_{\vh\in K_{d,s}} \nj{\vh,A^\T \eta_{T}}.
\]
We next show that $f(A)$ is a Lipschitz function on $\R^{m\times d}$ and its Lipschitz constant is $\norm{\eta}$. Indeed, for any matrices $A_1,A_2 \in \R^{m\times d}$, it holds that
\begin{equation*}
  \Big|\sup_{\vh\in K_{d,s}} \nj{\vh,A_1^\T \eta_{T}}-\sup_{\vh\in K_{d,s}} \nj{\vh,A_2^\T \eta_{T}}\Big| \le  \Big|\sup_{\vh\in K_{d,s}} \nj{(A_1-A_2)\vh,\eta_{T}}\Big|\le \norm{\eta}\norms{A_1-A_2}_F.
\end{equation*}
Then Lemma \ref{Gaussian space} implies that
\begin{equation}\label{eq:budeng}
  \PP \dkh{\sup_{\vh\in K_{d,s}} \nj{\vh,A^\T \eta_{T}}\ge \E \sup_{\vh\in K_{d,s}} \nj{\vh,A^\T \eta_{T}}+t} \le 2\exp\Big(-\frac{ct^2}{\norm{\eta}^2}\Big).
\end{equation}
Suppose that $C_1>0$ is a constant satisfying $ C_1^2\cdot c>1$.
Choosing $t=C_1\sqrt{m}\norm{\eta}$ in (\ref{eq:budeng}), we obtain that the following holds
with probability at least $1-2\exp(-c \cdot C_1^2\cdot m)$
\begin{equation*}
  \sup_{\vh\in K_{d,s}} \nj{\vh,A^\T \eta_{T}} \le C_0\sqrt{m}\norm{\eta}
\end{equation*}
for any fixed $T\subset \{1,\ldots,m\}$ .

Finally, note that the number of all subset $T\subset \{1,\ldots,m\}$ is $2^m$.  Taking a union bound over all the sets gives
\begin{equation*}
  \sup_{\vh \in K_{d,s} \atop T\subset \{1,\ldots,m\}}\nj{\vh,A^\T \eta_{T}} \le C_0\sqrt{m}\norm{\eta}
\end{equation*}
 with probability at least $1-2\exp(-\tilde{c}m)$. Here, we use the fact of $C_1^2\cdot c>1$. We  arrive at the conclusion.
\end{proof}

\begin{proof}[Proof of Theorem \ref{th:sparse constrain}]
Set $\vh^{-}:=\x-\vx_0,\; \vh^{+}:=\x+\vx_0$ and set
\begin{eqnarray*}
  && T_1:=\left\{j: \mathrm{sign}(\nj{\va_j,\x})=1,\; \mathrm{sign}(\nj{\va_j,\vx_0})=1\right\} \\
  && T_2:=\left\{j: \mathrm{sign}(\nj{\va_j,\x})=-1,\; \mathrm{sign}(\nj{\va_j,\vx_0})=-1\right\}  \\
  && T_3:=\left\{j: \mathrm{sign}(\nj{\va_j,\x})=1,\; \mathrm{sign}(\nj{\va_j,\vx_0})=-1\right\} \\
  && T_4:=\left\{j: \mathrm{sign}(\nj{\va_j,\x})=-1,\; \mathrm{sign}(\nj{\va_j,\vx_0})=1\right\}.
\end{eqnarray*}

  Without loss of generality, we can assume that $\# (T_{1}\cup T_2)=\beta m \ge m/2$.
Using an argument  similar to  one  for (\ref{key eq}), we obtain that
\begin{equation} \label{key inequality for sparse}
  \norm{A_{T_{12}}\vh^{-}}^2 \le 2\nj{\vh^{-},A_{T_1}^\T \eta_{T_1}-A_{T_2}^\T \eta_{T_2}}+ \|\eta_{T_{12}^c}\|^2.
\end{equation}
 To this end, we first need to show $\normone{\vh^{-}}\le 2\sqrt{s}\norm{\vh^{-}}$. Indeed, let $S:=\supp(\vx)$ and note that
\begin{equation*}
  \normone{\x}=\normone{\vx_0+\vh^{-}}=\normone{\vx_0+\vh_{S}^{-}}+\normone{\vh_{S^c}^{-}}\ge \normone{\vx_0}-\normone{\vh_{S}^{-}}+\normone{\vh_{S^c}^{-}}.
\end{equation*}
Here $\vh_{S}^{-}$ denotes the restriction of the vector $\vh^{-}$ onto the set of coordinates $S$. Then the constrain condition $\normone{\x}\le R:=\normone{\vx_0}$ implies that $\normone{\vh_{S^c}^{-}}\le \normone{\vh_{S}^{-}}$. Using H\"older inequality, we obtain that
\begin{equation*}
  \normone{\vh^{-}}\,=\,\normone{\vh_{S}^{-}}+\normone{\vh_{S^c}^{-}}\, \le\, 2\normone{\vh_{S}^{-}}\,\le\, 2\sqrt{s}\norm{\vh^{-}}.
\end{equation*}
We next give a lower bound for the left hand of inequality (\ref{key inequality for sparse}). Set
\begin{equation*}
  K:=\dkh{\vh \in \Rd: \norm{\vh}\le 1, \; \normone{\vh}\le 2\sqrt{s}}.
\end{equation*}
Note that $\vh^{-}/\norm{\vh^{-}}\in K$.
Since $A/\sqrt{m}$ satisfies strong RIP (see Lemma \ref{SRIP}), we obtain that
\begin{equation}\label{lower bound sparse cons}
 \norm{A_{T_{12}}\vh^{-}}^2\,\, \ge\,\, cm \norm{\vh^{-}}^2
\end{equation}
holds  with probability at least $1-\exp(-c_0 m)$, provided $m\gtrsim s\log (ed/s)$.

On the other hand,  Lemma \ref{upper bound sparse cons} implies that
\begin{equation}\label{up:bo}
  \nj{\vh^{-},A_{T_1}^\T \eta_{T_1}-A_{T_2}^\T \eta_{T_2}}\,\, \le\,\, 2C\sqrt{m}\norm{\eta}\norm{\vh^{-}}
\end{equation}
holds with probability at least $1-2\exp(-c_0m)$.
Putting (\ref{up:bo}) and (\ref{lower bound sparse cons})  into (\ref{key inequality for sparse}),  we obtain that
\begin{equation}\label{eq:hmineq}
  cm \norm{\vh^{-}}^2\le 4C\sqrt{m}\norm{\eta}\norm{\vh^-}+ \|\omega_{T_{12}^c}\|^2
\end{equation}
holds with probability at least $1-3\exp(-c_0 m)$. The (\ref{eq:hmineq}) implies that
\[
  \norm{\vh^{-}} \lesssim \frac{\norm{\omega}}{\sqrt{m}}.
\]
Similarly, if $\# (T_3\cup T_4)\geq m/2$, we can obtain that
\[
  \norm{\vh^{+}} \lesssim \frac{\norm{\omega}}{\sqrt{m}}.
\]
\end{proof}

\subsection{Proof of Theorem \ref{th:sparse}}
To this end, we introduce the following lemma.

\begin{lemma} \label{upper bound sparse uncons}
Let $A\in \R^{m\times d}$ be a Gaussian matrix whose entries are independent Gaussian random variables and $\eta \in \R^m$ be a fixed vector.
Then the following holds with probability at least $1-1/d^2$
\begin{equation} \label{constant c}
  \sup_{\vh\in \Rd \atop T\subset \{1,\ldots,m\}}\nj{\vh,A^\T \eta_{T}}\lesssim  ( \normone{\eta}+\norm{\eta}\sqrt{\log d})\normone{\vh},
\end{equation}
where $\eta_T$ denotes the vector generated by $\eta$ with entries in $T$ are themselves and others are zeros.
\end{lemma}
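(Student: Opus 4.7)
The key observation is that the inner supremum over $\vh \in \Rd$ collapses to a dual norm. Indeed, for any vector $\vz \in \Rd$, $\sup_{\vh \in \Rd} \langle \vh, \vz\rangle / \normone{\vh} = \norms{\vz}_\infty$, so the claim is equivalent to
\[
\max_{T \subset [m]} \norms{A^\T \eta_T}_\infty \,\lesssim\, \normone{\eta} + \norm{\eta}\sqrt{\log d}
\]
holding with probability at least $1 - 1/d^2$. This reformulation will let me avoid the hopeless naive union bound over all $2^m$ subsets $T$.

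The plan is to first eliminate the dependence on $T$ by a pointwise argument. For fixed coordinate $j \in [d]$, the $j$-th entry of $A^\T \eta_T$ is $\sum_{i \in T} a_{ij}\eta_i$. Choosing $T$ to consist exactly of the indices where $a_{ij}\eta_i$ has a given sign yields
\[
\max_{T \subset [m]}\Big|\sum_{i \in T} a_{ij}\eta_i\Big| \,\le\, \sum_{i=1}^m |a_{ij}|\,|\eta_i|.
\]
Thus it suffices to bound $\max_{1 \le j \le d}\sum_{i=1}^m |a_{ij}|\,|\eta_i|$, and the $T$-supremum disappears without any union bound.

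For each fixed $j$, the variables $a_{1j},\ldots,a_{mj}$ are i.i.d.\ $N(0,1)$, so the map $f_j(\va) := \sum_{i=1}^m |a_{ij}|\,|\eta_i|$, viewed as a function of $(a_{1j},\ldots,a_{mj}) \in \R^m$, is Lipschitz with constant $\norm{\eta}$ (by the triangle inequality and Cauchy--Schwarz). Its expectation is $\E f_j = \sqrt{2/\pi}\,\normone{\eta}$. Applying Lemma \ref{Gaussian space} with $t = C \norm{\eta}\sqrt{\log d}$ for a sufficiently large absolute constant $C$ yields, for each $j$,
\[
\sum_{i=1}^m |a_{ij}|\,|\eta_i| \,\le\, \sqrt{2/\pi}\,\normone{\eta} + C\norm{\eta}\sqrt{\log d}
\]
with probability at least $1 - 2\exp(-c C^2 \log d)$. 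A union bound over $j=1,\ldots,d$ (a price of only a factor $d$, not $2^m$) gives the same bound uniformly in $j$ with probability at least $1 - 2d \cdot \exp(-cC^2 \log d) \ge 1 - 1/d^2$ upon choosing $C$ large enough. Combining with the pointwise bound on $\max_T$ established above proves
\[
\max_{T \subset [m]}\norms{A^\T \eta_T}_\infty \,\lesssim\, \normone{\eta} + \norm{\eta}\sqrt{\log d},
\]
which is equivalent to the stated inequality.

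The only potential obstacle is the blow-up from summing over $2^m$ sets $T$, but this is sidestepped by the sign-choice observation in the second paragraph, which converts the maximum over $T$ into the pointwise bound $\sum_i |a_{ij}||\eta_i|$; after that, only a standard Gaussian concentration plus a mild union bound over $d$ coordinates remains.
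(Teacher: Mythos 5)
Your proof is correct and follows essentially the same route as the paper's: the H\"older/dual-norm reduction to $\sup_{T}\norms{A^\T \eta_T}_\infty$, the sign-choice observation $\sup_{T}\big|\sum_{i\in T}a_{ij}\eta_i\big|\le \sum_{i=1}^m |a_{ij}|\,|\eta_i|$ that removes the $2^m$ union bound over subsets, and a union bound over the $d$ coordinates. The only (immaterial) difference is that you control each $\sum_{i=1}^m |a_{ij}|\,|\eta_i|$ via Gaussian concentration for Lipschitz functions (Lemma \ref{Gaussian space}) rather than the paper's Hoeffding-type inequality; both yield the same sub-Gaussian tail in $t/\norm{\eta}$ and the same final bound.
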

\begin{proof}
By applying H\"older's inequality with $\ell_1$ and $\ell_\infty$ norms, we have
\begin{equation*}
  \nj{\vh,A^\T \eta_{T}} \le  \norms{A^\T \eta_{T}}_\infty\cdot \normone{\vh}.
\end{equation*}
Thus it is sufficient to present an upper bound of  $\sup_{T\subset \{1,\ldots,m\}} \norms{A^\T \omega_{T}}_\infty$.
We use $\ta_j\in \R^{m}, j=1,\ldots,d,$ to denote the {\em column} vectors of $A$.
 Then for any fixed index $j$ and $t>0$, we have
\begin{equation*}
  \PP\xkh{\sup_{T\subset \{1,\ldots,m\}} \abs{\ta_j^\T \omega_{T}}>t}\le  \PP\xkh{\sum_{i=1}^m \abs{\omega_{i}}\abs{\ta_{j,i}} >t}.
\end{equation*}
A simple calculation shows that $\E \abs{\omega_i}\abs{\ta_{j,i}}=\sqrt{2/\pi}\abs{\omega_i}$. By  Hoeffding's inequality, we  obtain that
\begin{equation}\label{eq:pror}
  \PP\xkh{\sum_{i=1}^m \abs{\omega_{i}}\abs{\ta_{j,i}} > C\Big(\normone{\eta}+\norm{\eta}\sqrt{\log d}\Big)} \le \frac{1}{d^3}
\end{equation}
holds for some constant $C>0$.
Taking a union bound over all indexes $j\in \{1,\ldots,d\}$, (\ref{eq:pror}) implies
\begin{equation*}
  \sup_{T\subset \{1,\ldots,m\}} \norms{A^\T \omega_{T}}_\infty \lesssim \normone{\eta}+\norm{\eta}\sqrt{\log d}
\end{equation*}
with probability at least $1-1/d^2$. Thus, we arrive at the conclusion.
\end{proof}

\begin{proof}[Proof of Theorem \ref{th:sparse}]
Set $\vh^{-}:=\x-\vx_0$ and $\vh^{+}:=\x+\vx_0$. Without loss of generality, we assume that $\normone{\vh^{-}}\le \normone{\vh^{+}}$. Since $\x$ is the solution  of (\ref{square model with sparse}), we have
\begin{equation} \label{minsolution sparse}
  \left\||A\x|-\vy\right\|^2+\lambda\norms{\x}_1 \le \left\||A\vx_0|-\vy\right\|^2+\lambda\norms{\vx_0}_1 =\left\|\eta\right\|_2^2+\lambda\norms{\vx_0}_1.
\end{equation}
For any index set $T\subset \{1,\ldots,m\}$, we set $A_T:=[\va_j:\;j\in T]^\T$ which is a submatrix of $A$. Set
\begin{eqnarray*}
  && T_1:=\left\{j: \mathrm{sign}(\nj{\va_j,\x})=1,\; \mathrm{sign}(\nj{\va_j,\vx_0})=1\right\}  \\
  && T_2:=\left\{j: \mathrm{sign}(\nj{\va_j,\x})=-1,\; \mathrm{sign}(\nj{\va_j,\vx_0})=-1\right\}  \\
  && T_3:=\left\{j: \mathrm{sign}(\nj{\va_j,\x})=1,\; \mathrm{sign}(\nj{\va_j,\vx_0})=-1\right\} \\
  && T_4:=\left\{j: \mathrm{sign}(\nj{\va_j,\x})=-1,\; \mathrm{sign}(\nj{\va_j,\vx_0})=1\right\}.
\end{eqnarray*}
Then a simple calculation leads to
\begin{equation} \label{Ax-y}
 \left\||A\x|-\vy\right\|^2=\norm{A_{T_1}\vh^{-}-\omega_{T_1}}^2+\norm{A_{T_2}\vh^{-}+\omega_{T_2}}^2+\norm{A_{T_3}\vh^{+}-\omega_{T_3}}^2+\norm{A_{T_4}\vh^{+}+\omega_{T_4}}^2.
\end{equation}
Substituting (\ref{Ax-y})   into (\ref{minsolution sparse}), we obtain that
\begin{eqnarray} \label{key eq sparse}
\norm{A_{T_{12}}\vh^{-}}^2+\norm{A_{T_{34}}\vh^{+}}^2  &\le & 2\nj{\vh^{-},A_{T_1}^\T \omega_{T_1}-A_{T_2}^\T \omega_{T_2}}+2\nj{\vh^{+},A_{T_3}^\T \omega_{T_3}-A_{T_4}^\T \omega_{T_4}} \nonumber \\
   & & + \lambda(\norms{\vx_0}_1-\norms{\vh^{+}-\vx_0}_1),
\end{eqnarray}
where $T_{12}:=T_1 \cup T_2$ and $T_{34}:=T_3 \cup T_4$. We claim that $\normone{\vh^{-}} \le 4\sqrt{s}\norm{\vh^{-}}$ and $\normone{\vh^{+}} \le 4\sqrt{s}\norm{\vh^{+}}$ hold with high probability. Indeed, let $S:=\supp(\vx_0)\subset \{1,\ldots,d\}$. Then
\begin{equation}\label{normoneh}
  \norms{\vh^{+}-\vx_0}_1 =\normone{\vh_S^{+}-\vx_0}+\normone{\vh_{S^c}^{+}} \ge \normone{\vx_0}-\normone{\vh_S^{+}}+\normone{\vh_{S^c}^{+}},
\end{equation}
where the inequality follows from triangle inequality. According to Lemma \ref{upper bound sparse uncons}, we obtain that
\begin{equation} \label{eq:up_lemma}
  \nj{\vh^{-},A_{T_1}^\T \omega_{T_1}-A_{T_2}^\T \omega_{T_2}}\le \frac{\lambda}{8}\normone{\vh^{-}} \quad \mathrm{and}\quad \nj{\vh^{+},A_{T_3}^\T \omega_{T_3}-A_{T_4}^\T \omega_{T_4}}\le \frac{\lambda}{8} \normone{\vh^{+}}
\end{equation}
 holds with probability at least $1-1/d^2$, where $\lambda \gtrsim  \normone{\eta}+\norm{\eta}\sqrt{\log d}$. Putting (\ref{normoneh}) and (\ref{eq:up_lemma}) into (\ref{key eq sparse}) and using the fact $\normone{\vh^{-}}\le \normone{\vh^{+}}$,  we can obtain that
\begin{equation}\label{eq:AT12H}
  \norm{A_{T_{12}}\vh^{-}}^2+\norm{A_{T_{34}}\vh^{+}}^2 \le  \frac{\lambda}{2} \normone{\vh^{+}}+ \lambda(\normone{\vh_S^{+}}-\normone{\vh_{S^c}^{+}})
\end{equation}
holds with probability at least $1-1/d^2$.  The (\ref{eq:AT12H}) implies that
\begin{equation*}
 \frac{\lambda}{2} \normone{\vh^{+}}+ \lambda(\normone{\vh_S^{+}}-\normone{\vh_{S^c}^{+}})\ge 0,
\end{equation*}
which gives $\normone{\vh_{S^c}^{+}} \le 3\normone{\vh_S^{+}}$ and hence
$\normone{\vh^{+}} \le 4\normone{\vh_S^{+}}$. By the H\"older's inequality, we obtain that
 \[
 \normone{\vh^{+}} \le 4\sqrt{s}\norm{\vh^{+}}.
 \]
    On the other hand, note that
\begin{equation*}
  \normone{\vh_S^{+}}=\normone{\x_S+\vx_0}, \quad \normone{\vh_S^{-}}=\normone{\x_S-\vx_0}\quad \mathrm{and} \quad \normone{\vh_{S^c}^{+}}=\normone{\vh_{S^c}^{-}}.
\end{equation*}
Combining with $\normone{\vh^{-}}\le \normone{\vh^{+}}$, we can  obtain that $\normone{\vh^{-}} \le 4\sqrt{s}\norm{\vh^{-}}$.

 We next present an upper bound of $\norm{\vh^{-}}$. Without loss of generality, we assume that $\# T_{12}=\beta m \ge m/2$. The  (\ref{Ax-y}) implies that
\begin{equation}\label{eq:jin}
  \left\||A\x|-\vy\right\|^2\ge \norm{A_{T_1}\vh^{-}-\omega_{T_1}}^2+\norm{A_{T_2}\vh^{-}+\omega_{T_2}}^2.
\end{equation}
Substituting (\ref{eq:jin})  into (\ref{minsolution sparse}) we obtain that
\begin{equation}\label{key}
\begin{aligned}
  \norm{A_{T_{12}}\vh^{-}}^2 &\le  2\nj{\vh^{-},A_{T_1}^\T \omega_{T_1}-A_{T_2}^\T \omega_{T_2}} + \lambda(\norms{\vx_0}_1-\norms{\vh^{-}+\vx_0}_1)+ \|\omega_{T_{12}^c}\|^2\\
&\le 2\nj{\vh^{-},A_{T_1}^\T \omega_{T_1}-A_{T_2}^\T \omega_{T_2}} + \lambda(\normone{\vh_S^{-}}-\normone{\vh_{S^c}^{-}})+ \|\omega_{T_{12}^c}\|^2.
\end{aligned}
\end{equation}
Here, we use
\begin{equation*}
  \norms{\vh^{-}+\vx_0}_1 =\normone{\vh_S^{-}+\vx_0}+\normone{\vh_{S^c}^{-}} \ge \normone{\vx_0}-\normone{\vh_S^{-}}+\normone{\vh_{S^c}^{-}}.
\end{equation*}
We  consider the left side of (\ref{key}).
Recall that $\normone{\vh^{-}} \le 4\sqrt{s}\norm{\vh^{-}}$. Then
\begin{equation}\label{lower bound sparse uncons}
 \norm{A_{T_{12}}\vh^{-}}^2 \ge cm \norm{\vh^{-}}^2
\end{equation}
 with probability at least $1-\exp(-c_0 m)$, provided $m \gtrsim s\log (ed/s)$
(see Remark \ref{re:SRIP}).
For the right hand of (\ref{key}), we use  (\ref{eq:up_lemma}) to obtain that
\begin{equation}\label{sparse:upper bound}
\begin{aligned}
 \norm{A_{T_{12}}\vh^{-}}^2 &\le \frac{\lambda}{4}\normone{\vh^{-}}+\lambda(\normone{\vh_S^{-}}-\normone{\vh_{S^c}^{-}})+ \|\omega_{T_{12}^c}\|^2 \\
   &\le  \frac{5\lambda}{4}\normone{\vh_S^{-}}+ \|\omega_{T_{12}^c}\|^2 \\
   &\le   \frac{5\lambda\sqrt{s}}{4}\norm{\vh^{-}}+ \|\omega_{T_{12}^c}\|^2
\end{aligned}
\end{equation}
holds with probability at least $1-1/d^2$. Combining (\ref{lower bound sparse uncons}) and (\ref{sparse:upper bound}),  we have
\begin{equation*}
  cm \norm{\vh^{-}}^2\le \frac{5\lambda\sqrt{s}}{4}\norm{\vh^{-}}+ \|\omega_{T_{12}^c}\|^2
\end{equation*}
 with probability at least $1-\exp(-c_0 m)-1/d^2$. By solving the above inequality, we arrive at the conclusion
\begin{eqnarray*}
  \norm{\vh^{-}} &\lesssim & \frac{\lambda\sqrt{s}}{m}+\frac{\norm{\omega}}{\sqrt{m}}.
\end{eqnarray*}
\end{proof}

\section{Discussion}
We have analyzed the estimation performance of the nonlinear least squares for phase retrieval. We show that the reconstruction error of the nonlinear least square model is $O(\|\eta\|_2/\sqrt{m})$ and we also prove that this recovery bound is optimal up to a constant.  For sparse phase retrieval, we also obtain similar results  for the nonlinear Lasso.
It is of interest to extend the results in this paper to complex signals. Moreover, assume that
 $y_i=f(\abs{\va_i,\vx_0})+\eta_i, i=1,\ldots,m$, where $f:\mathbb{R}\to \mathbb{R}$ is a continuous function. It is interesting to consider the recovery error of the model $\min_\vx \|\abs{A\vx}-\vy\|$ under this setting, which is the subject of our future work.

\end{document}